\newtheorem{theorem}{Theorem}
\newtheorem{proposition}{Proposition}
\newtheorem{lemma}{Lemma}
\newtheorem{corollary}{Corollary}
\theoremstyle{definition}
\newtheorem{definition}{Definition}
\newtheorem{assumption}{Assumption}
\theoremstyle{remark}
\newtheorem*{remark}{Remark}
\newcommand{\T}{{ \mathrm{\scriptscriptstyle T} }}
\newcommand{\dd}{\,\mathrm{d}}
\newcommand{\RR}{\mathbbm{R}}
\newcommand{\law}{\mathbbm{P}}
\newcommand{\GcondLaw}{\mathbbm{G}}
\newcommand{\condLawFull}{\mathbbm{P}^{(T,x_0,x_T)}}
\newcommand{\condLaw}{\mathbbm{P}^{\star}}
\newcommand{\knots}{\mathcal{K}}
\newcommand{\anchors}{\mathcal{A}}
\newcommand{\blockG}{\mathcal{G}}
\newcommand{\Ltwo}{\mathcal{L}^2}
\DeclareMathOperator{\sech}{sech}
\newcommand{\cta}{\ensuremath{c_2}}
\newcommand{\ctb}{\ensuremath{c_3}}
\newcommand{\ctc}{\ensuremath{c_1}}
\newcommand{\ctd}{\ensuremath{c_5}}
\newcommand{\cte}{\ensuremath{c_4}}
\newcommand{\ctf}{\ensuremath{c_8}}
\newcommand{\ctg}{\ensuremath{c_6}}
\newcommand{\cth}{\ensuremath{c_7}}
\title{The computational cost of blocking for sampling discretely observed diffusions}
\author{ \href{https://orcid.org/0000-0003-3031-8290}{\includegraphics[scale=0.06]{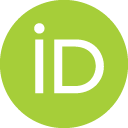}\hspace{1mm}Marcin Mider}
	\\
	Max Planck Institute for Mathematics in the Sciences\\
	Inselstraße 22, 04103 Leipzig, Germany\\
	\texttt{marcin.mider@mis.mpg.de} \\
	\And
	 \href{https://orcid.org/0000-0001-7603-4205}{\includegraphics[scale=0.06]{orcid.png}\hspace{1mm}Paul A.~Jenkins} \\
	Department of Statistics\\
	University of Warwick\\
	Coventry CV4 7AL, U.K.\\
	\texttt{p.jenkins@warwick.ac.uk} \\
	\And
	Murray Pollock \\
	School of Mathematics, Statistics and Physics\\
	Newcastle University\\
	Newcastle-upon-Tyne NE1 7RU, U.K.\\
	\texttt{murray.pollock@newcastle.ac.uk} \\
	\And
	Gareth O.~Roberts \\
	Department of Statistics\\
	University of Warwick\\
	Coventry CV4 7AL, U.K.\\
	\texttt{gareth.o.roberts@warwick.ac.uk} \\
}
\date{}
\begin{document}

\maketitle

\begin{abstract}
Many approaches for conducting Bayesian inference on discretely observed diffusions involve imputing diffusion bridges between observations. This can be computationally challenging in settings in which the temporal horizon between subsequent observations is large, due to the poor scaling of algorithms for simulating bridges as observation distance increases. It is common in practical settings to use a \emph{blocking scheme}, in which the path is split into a (user-specified) number of overlapping segments and a Gibbs sampler is employed to update segments in turn. Substituting the independent simulation of diffusion bridges for one obtained using blocking introduces an inherent trade-off: we are now imputing shorter bridges at the cost of introducing a dependency between subsequent iterations of the bridge sampler. This is further complicated by the fact that there are a number of possible ways to implement the blocking scheme, each of which introduces a different dependency structure between iterations. Although blocking schemes have had considerable \emph{empirical} success in practice, there has been no analysis of this trade-off nor guidance to practitioners on the particular specifications that should be used to obtain a computationally efficient implementation. In this article we conduct this analysis and demonstrate that the expected computational cost of a blocked path-space rejection sampler applied to Brownian bridges scales asymptotically at a cubic rate with respect to the observation distance and that this rate is linear in the case of the Ornstein--Uhlenbeck process. Numerical experiments suggest applicability both of the results of our paper and of the guidance we provide beyond the class of linear diffusions considered.

\keywords{Bayesian inference \and Blocking \and Diffusion \and  Gaussian process \and  Markov chain Monte Carlo}
\end{abstract}


\section{Introduction} \label{sec:intro}
Diffusions have been widely applied to model continuous-time phenomena of interest, including molecular dynamics \citep{boys2008bayesian}, neuroscience \citep{lansky2008review}, and finance \citep{karatzas1998methods}. In general, a diffusion on $\RR^d$ is a Markov process $X$ defined to be the solution, with law we will denote by $\law$, to a stochastic differential equation of the following form:
\begin{align}
  \dd X_t &= b(X_t) \dd t + \sigma(X_t) \dd W_t,\quad X_0=x_0,\quad t\in[0,T], \label{eq:master_sde}
\end{align}
where $b:\RR^d\to\RR^d$ and $\sigma:\RR^d\to\RR^{d\times d'}$ denote the drift and volatility coefficient respectively, and $W$ is a standard $d'$-dimensional Brownian motion. Throughout we assume standard regularity conditions hold which ensure the existence of a unique, global, weak solution to \eqref{eq:master_sde} \citep[see for instance][]{oksendal2003stochastic}.

In practice we will typically only have access to discrete observations of \eqref{eq:master_sde}, and so for practitioners the statistical problem of interest is to use these observations to draw inference on the parameters of $b$ and $\sigma$ of \eqref{eq:master_sde}. A common Bayesian strategy is to augment the parameter space with the space describing the complete underlying diffusion trajectory. A Markov Chain Monte Carlo algorithm can then explore this augmented space by alternating between updates of the parameters and updates of the unobserved sample path connecting observations (sampling of diffusion bridges) \citep{roberts2001inference}. As a consequence a considerable and methodologically diverse literature has been developed concerned with simulating diffusion bridges (the law of \eqref{eq:master_sde} conditioned to terminate at the subsequent observation (for instance, $X_T=x_T$), which we denote by $\condLawFull$ or generically by $\condLaw$), including  \citet{beskos2005exact,bladt2014simple,delyon2006simulation,durham2002numerical,golightly_bayesian:2008,hairer2011sampling,roberts2001inference,schauer2017guided}. 
 
One of the common difficulties with Markov Chain Monte Carlo strategies is sampling diffusion bridges between distant observations; the \emph{duration} of the bridge, which we denote by $T$, is large. This setting naturally arises when the underlying diffusion \eqref{eq:master_sde} is \emph{sparsely} observed (or high-dimensional), for instance in shape analysis applications \citep{arnaudon2020diffusion}, or in the case of diffusions on graphs \citep{freidlin1993diffusion}. The problem here is that methodologies for sampling diffusion bridges scale poorly with $T$, and many of the most widely used approaches have \emph{exponential computational cost} in $T$. Consequently, addressing the poor scaling in $T$ has drawn considerable interest. One popular approach is the \emph{blocking} scheme introduced by \cite{shephard1997likelihood}, which has been employed in a number of practical problems with strong empirical evidence of its efficacy \citep{chib2004likelihood, golightly_bayesian:2008, kalogeropoulos2007likelihood, kalogeropoulos2010inference,van2018bayesian,stramer2007bayesian}.

Blocking is a conceptually simple idea in which the time domain of the diffusion bridge is overlaid with a set of \emph{temporal anchors} ($0=:k_0<k_1<\dots<k_m<k_{m+1}:=T$), and the values of the bridge are taken for some \emph{initialisation} trajectory at those points (which are known as \emph{knots}, and for which we will denote $X_i:=X_{k_i}$ to simplify notation). Simulation from $\condLawFull$ is then achieved by constructing a Gibbs sampler which alternates between updating knots and updating the segments of the trajectory conditional on the knots, a number of times. For instance, we could begin by simulating from the conditional law $\mathbbm{P}^{(k_2-k_0,X_0,X_{2})}$ (updating the trajectory between $[t_0,t_2]$ which includes the knot at $X_1$, conditional on the knots at $X_0$ and $X_2$), and then $\mathbbm{P}^{(k_3-k_1,X_1,X_{3})}$ (updating the trajectory between $[t_1,t_3]$ and containing the knot $X_2$, conditional on the knots at $X_1$ and $X_3$), and so on \emph{sweeping} across all anchor points. This sweep would then be iterated a number of times to reduce the dependency between the resulting bridge and the initial (or previous) trajectory. In this article we consider the three canonical blocking schemes of \citet{roberts1997updating} with equidistant anchors: the \emph{checkerboard} scheme, in which the odd and even indexed knots are alternatively updated; the \emph{lexicographic} scheme, in which the knots are updated in temporal order; and the \emph{random} scheme, in which at each step a random knot is updated. We more formally introduce blocking and define these schemes in Section \ref{sec:blocking}.

From a computational perspective, blocking substitutes the expensive simulation of a (single independent) draw from $\condLawFull$, with the cost of simulating repeated sweeps of the $m+1$ shorter (and computationally more efficient) bridges for each segment given by the temporal anchors. Any analysis of this trade-off needs to take into account the serial correlation induced by the blocking strategy.

Despite widespread adoption of blocking in practice to mitigate the computational cost of simulating diffusion bridges (as indicated above), there is little theoretical support for its efficacy. Furthermore, there is little concrete guidance on how to implement, and then appropriately tune (selecting for instance the number and locations of the anchor points), a blocking scheme. 

In this article we provide general guidance for implementing blocking schemes by addressing these practical considerations for particular classes of diffusion process. We analyse the computational cost of several rejection sampling algorithms for bridges as a function of block size and bridge duration. In all cases we consider a fixed regular spacing of $m$ anchor points as $m,T\to\infty$, in contrast with the study of the \emph{`in-fill'} asymptotic of \citet{roberts2001inference} in which $T$ is fixed and $m\to\infty$. We analyse the expected cost of a single iteration of various algorithms, and then to capture the trade-off described above we consider the cost of the algorithm which comprises both the cost of one iteration, and the total number of iterations required to obtain an `independent' sample. We give a more formal description of what we mean by achieving independence below, in terms of the relaxation time of the underlying Markov chain.

In this article we work under the assumption that the underlying measure is a Gaussian diffusion (i.e.\ $\law$ is the law of a scaled Brownian motion or the law of the Ornstein--Uhlenbeck process). Under this simplification the Gibbs step for updating the bridge segments can be implemented without error, i.e.\ without discretising time, for example by means of a rejection sampler directly on the path-space of the diffusion (see Appendix \ref{sec:psrs} for full details). In this setting we prove that Theorem \ref{thm:main_thm_comp_cost} below holds, as the culmination of the results in Section \ref{sec:computational_cost_of_blocking}. We gather all proofs in the appendices.

\begin{theorem}\label{thm:main_thm_comp_cost}
  Suppose $\condLaw$ is the conditional law of a Gaussian diffusion which is sampled by rejection on path-space and using a checkerboard or lexicographic or random blocking scheme. Suppose the $m$ anchors are spaced equidistantly such that $m=\ctc T$ (for some constant $\ctc >0$). Then the expected computational cost of the blocked rejection sampler, $C_{\texttt{blocking}}(T)$, satisfies:
  \begin{equation}\label{eq:cost_of_blocking_explicit}
    C_{\texttt{blocking}}(T)=\mathcal{O}(T^3),\qquad\mbox{as}\quad T\to\infty,
  \end{equation}
  whenever $\law$ denotes the law of a scaled Brownian motion and
  \begin{equation}\label{eq:cost_of_blocking_explicit_OU}
    C_{\texttt{blocking}}(T)=\mathcal{O}(T),\qquad\mbox{as}\quad T\to\infty,
  \end{equation}
  whenever $\law$ denotes the law of the Ornstein--Uhlenbeck process.
\end{theorem}

\begin{remark} \label{rem:oubetter}
 Note that in the case of a Brownian bridge there is long range dependency in the path, in the sense that the correlation between $X_s$ and $X_t$ is non-negligible even for $0 \ll s \ll t \ll T$. On the other hand, for the Ornstein--Uhlenbeck process its ergodicity breaks this dependency. For an Ornstein--Uhlenbeck process whose drift is of the form $b(X_t) = -\theta X_t$ and for $T\gg 0$, there is a phase transition in its behaviour as $\theta \to 0$ in that the computational cost of a blocked rejection sampler for Brownian motion is \emph{not} recovered. Recall that in this paper we are working under the assumption that the underlying measure is a Gaussian diffusion, but in most practical settings the \emph{target law} will be more complicated. In such settings it would be typical to use a Gaussian diffusion as a \emph{proposal law} for the non-Gaussian target law. In principle Theorem \ref{thm:main_thm_comp_cost} would suggest that an Ornstein--Uhlenbeck process proposal for a \emph{stationary} target law would be advantageous over a Brownian bridge proposal, although in practice this predicted computational saving would depend on how closely the target process matched the invariant distribution of the Ornstein--Uhlenbeck process.
\end{remark}

Theorem \ref{thm:main_thm_comp_cost} contrasts sharply with the case without blocking. We show later in Proposition \ref{prop:comp_cost} that, for a $d$-dimensional Brownian bridge proposal in the absence of blocking, the cost is exponential in $T$. Although what we prove in Theorem \ref{thm:main_thm_comp_cost} addresses a somewhat idealised setting, the requirement $m=\ctc T$ acts as a concrete guide for choosing the number of blocks. Furthermore, our empirical results in Section \ref{sec:numerical} indicate that the guidance we establish can be more broadly useful beyond the class of linear diffusions. Thus we demonstrate that blocking can lead to significantly improved computational efficiency when conducting inference for discretely observed diffusions.


\section{Blocking} \label{sec:blocking}
In this section we provide a systematic definition of blocking for sampling a diffusion path. Define a set of anchors spread across the time domain: $0<k_1<\dots<k_m<T$ and \emph{knots}  as the values of the path taken at the anchors:
\begin{equation*}
    \knots(\omega):=\{ X_{k_1}(\omega), \dots,X_{k_m}(\omega) \}.
\end{equation*}
Each anchor is now assigned to one of $\mathbbm{k}$ disjoint subsets $\anchors_i$, $i=1,\dots,\mathbbm{k}$, each comprising $m_i$ anchors:
\begin{equation*}
    \{k_1,\dots,k_m\}=\bigcupdot_{i=1}^\mathbbm{k}\{r_{i1},\dots,r_{im_i}\}=:\bigcupdot_{i=1}^\mathbbm{k}\anchors_i,\quad (\mbox{with } m_i\in\mathbbm{N}_+,\, i\in\{1,\dots,\mathbbm{k}\}).
\end{equation*}
In particular, $\mathcal{A}_i = \{r_{i1},\dots,r_{im_i}\}$ is the set of anchors associated (uniquely) to the $i$th block. This allows us to group the knots by associating them with the corresponding subsets of anchors:
\begin{equation*}
    \knots_i(\omega):=\{X_r(\omega)\,;\,r\in\anchors_i\},\quad i\in\{1,\dots,\mathbbm{k}\}.
\end{equation*}
For convenience of notation we let $\knots_{-i}$ (resp.\ $\anchors_{-i}$) denote all knots (resp.\ anchors) that do not belong to $\knots_i$ (resp.\ $\anchors_i$):
\begin{align*}
    \knots_{-i}(\omega)&:=\bigcup_{j\neq i}\knots_j(\omega), 
    & \anchors_{-i}(\omega)&:=\bigcupdot_{j\neq i}\anchors_j(\omega),& i&\in\{1,\dots,\mathbbm{k}\},
\end{align*}
and assign labels to an ordered collection of all anchors in $\anchors_{-i}$, plus the end-points:
\begin{equation*}
    \{e_{i0},\dots,e_{i(m+1-m_i)}\} = \anchors_{-i} \cup \{0, T\},\quad (\mbox{with } e_{ij}<e_{i(j+1)}),\quad i\in\{1,\dots,\mathbbm{k}\}.
\end{equation*}
Further, define
\begin{equation*}
    \mathcal{B}_i:=\left\{\left.(e_{ij},e_{i(j+1)})\,\right|\,\exists r\in\anchors_i\mbox{ s.t. }r\in[e_{ij},e_{i(j+1)}]\right\}_{j=0}^{m-m_i},
\end{equation*}
to be only those intervals between the end-points or anchors in $\anchors_{-i}$, which contain at least one anchor belonging to $\anchors_i$. The path segments $X|_{\mathcal{B}_i}$, obtained through restricting $X$ to $\mathcal{B}_i$, are termed \emph{blocks}. Finally, in the case $\mathbbm{k}=2$ we say that $\anchors_1$ and $\anchors_2$ are \emph{interlaced} if whenever $a,c\in\anchors_i$, with $a<c$, then there exists $b\in\anchors_{(i\mod 2)+1}$ s.t. $a<b<c$, $i=1,2$.

A sampler for a path equipped with a blocking technique is a Gibbs sampler that updates the full path only one block at a time by drawing from the conditional laws $\condLaw|_{\mathcal{B}_i}(\cdot | \knots_{-i})$---i.e. the target laws restricted to blocks $\mathcal{B}_i$ and conditioned on the knots in $\knots_{-i}$. For simplicity we refer to this technique as a \emph{blocked sampler} in the remainder of the text, and present general pseudo-code for it in Algorithm \ref{alg:blocking}.

\begin{algorithm}[H]
    \caption{Blocked sampler on path-space}\label{alg:blocking}
        Initialise $X$\;
        \For{$n=1,\dots,N$}{
            \For{$i=1,\dots,\mathbbm{k}$}{
                Draw $I\sim q(i,\cdot)$ (various choices for $q$ are defined below, in Definitions \ref{def:cbus}--\ref{def:rbus})\;
                Update $X|_{\mathcal{B}_I}$ by sampling $X|_{\mathcal{B}_I}\sim\condLaw|_{\mathcal{B}_I}(\cdot|\mathcal{K}_{-I})$\;
            }
        }
        \Return{$X$}
\end{algorithm}

There are a number of ways we can update the blocks, and in this article we consider the three canonical blocking schemes of \citet{roberts1997updating}. In particular, we refer to a single, full Gibbs sweep of  Algorithm \ref{alg:blocking} (the inner \texttt{for-loop}) as a:

\begin{definition}\label{def:cbus}
    \emph{Checkerboard} blocking update scheme if $\mathbbm{k}=2$, $\anchors_1$ and $\anchors_2$ are interlaced, and $q(i,j):=\mathbbm{1}_{\{i\}}(j)$.
\end{definition}

\begin{definition}\label{def:lbus}
    \emph{Lexicographic} blocking update scheme if $\mathbbm{k}=m$, $\anchors_i:=\{k_i\}$, $i\in\{1,\dots,m\}$, and $q(i,j):=\mathbbm{1}_{\{i\}}(j)$.
\end{definition}

\begin{definition}\label{def:rbus}
    \emph{Random} blocking update scheme if $\mathbbm{k}=m$, $\anchors_i:=\{k_i\}$, $i\in\{1,\dots,m\}$, and $q(i,j):=\frac{1}{m}\mathbbm{1}_{\{1,\dots,m\}}(j)$.
\end{definition}

The above are not exhaustive, but characterise the most widely used, and are tractable enough for analysis. We further simplify various computations by assuming the anchors are \emph{equidistant}, and defer discussion of this assumption and its relaxation to Section \ref{sec:discussion}.

\begin{assumption}\label{as:equidistant_grid}
  The anchors are placed on an equidistant grid:
    \[
        k_{i+1}-k_i=\frac{T}{m+1}=:\delta_{m,T},\quad i\in\{1,\dots,m-1\}.
    \]
\end{assumption}

As mentioned in the Introduction, we will study the asymptotic regime in which $\delta_{m,T}$ is fixed as $m,T\to\infty$.


\section{Computational analysis}\label{sec:computational_cost_of_blocking}


\subsection{Cost of a single sweep}

We begin by quantifying the computational cost of a rejection sampling algorithm for diffusion bridges in the absence of blocking. The setting here, as given in further detail in Appendix \ref{sec:psrs}, is one in which the target law is absolutely continuous with respect to a $d$-dimensional Brownian bridge proposal path.

\begin{proposition}\label{prop:comp_cost}
  Under Assumptions 3--9 enumerated in Appendix \ref{sec:psrs}, the expected computational cost as a function of\/ $T$ of obtaining a single draw with a path-space rejection sampling algorithm, denoted by $C_{\texttt{rej}}(T)$, is given by
  \begin{equation}\label{eq:comp_cost_psrs}
    C_{\texttt{rej}}(T) = f(T)Te^{\cta T},
  \end{equation}
  where $\cta >0$ is some constant independent of $T$, and the function $f:\RR_+\to\RR$ is continuous and such that $f(T)\sim T^{-d/2}$ as $T\to\infty$.
  In particular, for large enough $T$ there is a constant $\ctb >0$ such that:
  \begin{equation*}
    C_{\texttt{rej}}(T) \geq \ctb T^{1-d/2}e^{\cta T}.
  \end{equation*}
\end{proposition}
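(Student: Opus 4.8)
The plan is to track the expected cost of a path-space rejection sampler as the product of two factors: the expected number of proposals needed before acceptance (which is the reciprocal of the acceptance probability), and the expected cost of handling a single proposal. For a path-space rejection sampler targeting the bridge law $\condLaw$ against a tractable dominating bridge measure (a Brownian bridge, or in the Ornstein--Uhlenbeck case a suitable Gaussian bridge), the acceptance probability is proportional to $\mathbb{E}\big[\exp(-\int_0^T \phi(X_s)\dd s)\big]$ for the relevant Girsanov-type functional $\phi$, multiplied by a transition-density ratio evaluated at the endpoints. The first step is therefore to write down this acceptance probability explicitly, using the assumptions catalogued in Appendix~\ref{sec:psrs}, and to identify the leading-order behaviour in $T$ of each piece.

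The key asymptotic input is the ratio of the true transition density $p_T(x_0,x_T)$ to the dominating density $q_T(x_0,x_T)$, together with the normalising constant of the exponential functional. For a diffusion on $\RR^d$ the transition density has a Gaussian-type prefactor scaling like $T^{-d/2}$, which is exactly the source of the factor $f(T)\sim T^{-d/2}$; I would isolate this by a local central-limit / heat-kernel expansion (or, in the Gaussian case, by direct computation since everything is explicit). The exponential term $e^{\cta T}$ arises because the expected value of $\exp(-\int_0^T\phi(X_s)\dd s)$ decays exponentially in $T$ with rate governed by the relevant spectral/Lyapunov quantity of the generator; equivalently, the expected number of proposals grows like $e^{\cta T}$, with $\cta>0$ determined by the ground-state energy of the Feynman--Kac semigroup associated with $\phi$. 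The remaining factor of $T$ is the per-proposal cost: simulating and evaluating a candidate path over a horizon of length $T$ costs $\Theta(T)$ (e.g. the number of layers, skeleton points, or the cost of the retrospective acceptance test all scale linearly in $T$ under the stated assumptions).

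Assembling these three contributions gives $C_{\texttt{rej}}(T) = f(T)\,T\,e^{\cta T}$ with $f$ continuous and $f(T)\sim T^{-d/2}$, which is \eqref{eq:comp_cost_psrs}. The final lower bound then follows immediately: since $f(T)\sim T^{-d/2}$, there is $T_0$ and $\ctb>0$ with $f(T)\geq \ctb T^{-d/2}$ for all $T\geq T_0$, whence $C_{\texttt{rej}}(T)\geq \ctb T^{1-d/2}e^{\cta T}$ for $T$ large enough. The main obstacle I anticipate is making the claim $f(T)\sim T^{-d/2}$ rigorous and clearly separating it from the exponential factor: one must ensure that the heat-kernel prefactor, the endpoint density ratio, and the normalisation of the Feynman--Kac functional are disentangled so that the polynomial and exponential rates do not contaminate each other. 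In the Gaussian (scaled Brownian motion or Ornstein--Uhlenbeck) setting this is cleanest because the transition densities, the bridge laws, and the relevant functionals are all available in closed form, so the asymptotics can be read off directly; the general statement of Proposition~\ref{prop:comp_cost} is then a matter of invoking the standard short- and long-time heat-kernel estimates permitted by Assumptions 3--9.
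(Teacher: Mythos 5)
Your overall architecture matches the paper's: the expected cost is the expected per-proposal cost divided by the expected acceptance probability, the acceptance probability is the Girsanov-type functional $\mathbbm{E}_{\mathbbm{Q}^*}\left[\exp\{-\int_0^T(\phi(X_t)-\Phi)\dd t\}\right]$, which the paper evaluates \emph{exactly} as $\cth\,\frac{p_T(x_0,x_T)}{q_T(x_0,x_T)}e^{\Phi T}$ (Lemma \ref{lem:for_comp_cost}), and the per-proposal cost is $\Theta(T)$ by Assumption \ref{as:x_cost}. However, there are two genuine gaps. First, the positivity of the exponential rate $\cta$ is asserted rather than proved: you appeal to a ``ground-state energy of the Feynman--Kac semigroup'', but nothing in that heuristic rules out a zero rate (if $\phi$ were constant, the centred functional would equal $1$ for all $T$). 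The paper's route (Lemma \ref{lem:l_negative}) is elementary and essentially forced: since the acceptance probability equals $\cth (p_T/q_T)e^{\Phi T}$ exactly and $p_T/q_T\sim T^{d/2}\to\infty$, the constraint that a probability is at most $1$ forces $\Phi<0$, whence $\cta=-\Phi>0$. Your sketch supplies no substitute for this step.

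Second, your attribution of the $T^{-d/2}$ factor to ``a Gaussian-type prefactor of the transition density'' obtained by ``a local central-limit / heat-kernel expansion'' conflates short-time and long-time asymptotics and does not identify which density supplies the decay. In the relevant $T\to\infty$ regime the target density $p_T(x_0,v)$ does \emph{not} scale like $T^{-d/2}$: by ergodicity (Assumption \ref{as:ergodicity}) it converges to the stationary density $\hat p(v)>0$, and it is only the \emph{proposal} density $q_T$ (Gaussian with variance growing linearly in $T$) that contributes the $T^{-d/2}$, so that $f(T)=q_T/p_T\sim T^{-d/2}$ (Lemma \ref{lem:for_comp_cost2}). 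If both densities decayed like $T^{-d/2}$ --- as happens for a drifted Brownian target, precisely the case excluded by Assumption \ref{as:ergodicity} and discussed in the Remark following the Proposition --- the ratio would be $O(1)$ and both the polynomial factor and the argument for $\Phi<0$ would collapse. So ergodicity is doing essential work that your sketch does not surface. The concluding lower bound is fine, and the factorisation of the expected cost into (number of trials)$\times$(cost per trial) is legitimate but silently uses Assumption \ref{as:independence_of_cost}, which you should invoke explicitly.
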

\begin{remark}
Note that Proposition \ref{prop:comp_cost} does not stipulate in what way the constant $\ctb$ might vary with dimension. Without further structure it is impossible to characterise this behaviour. However it is highly likely that   $\ctb$ will increase at least linearly with dimension (as for instance would be the case for diffusions consisting of $d$ independent components). 
\end{remark}
\begin{remark}
    If $\law$ is the law of a drifted Brownian motion, then Proposition \ref{prop:comp_cost} cannot be applied directly, because Assumption \ref{as:ergodicity} does not hold. However, for this case an easy calculation shows that the acceptance probability of a rejection sampler with Brownian bridge proposals is equal to $1$, implying (under Assumption \ref{as:x_cost}) that $C_{\texttt{rej}}(T)$ is proportional to $T$.
\end{remark}

Now considering a single sweep of the blocking schemes introduced in Section \ref{sec:blocking}, note that we have substituted sampling a single diffusion bridge (of length $T$) with sampling a number of diffusion bridges of shorter time horizon, $2\delta_{m,T}$ (for example, to ensure the point $X_{2\delta_{m,T}}$ is updated one could sample a new bridge of length $2\delta_{m,T}$ connecting $X_{\delta_{m,T}}$ with $X_{3\delta_{m,T}}$). By application of Proposition \ref{prop:comp_cost}, the expected computational cost of simulating each of these shorter bridges is therefore $C_{\texttt{rej}}(2\delta_{m,T})$, and hence the expected cost of a single Gibbs sweep is:
\begin{equation}\label{eq:gibbs_sweep_cost1}
  C_{\texttt{sweep}}(T,m):=m\cdot C_{\texttt{rej}}(2\delta_{m,T})=
    f(2\delta_{m,T})\frac{2mT}{m+1}\exp\{2\cta \delta_{m,T}\}.
\end{equation}
Equation \eqref{eq:gibbs_sweep_cost1} holds for all $m$ and $T$ and follows from \eqref{eq:comp_cost_psrs}; however, as the behaviour of $f(t)$ for small $t$ is not immediately transparent, to learn something about $C_{\texttt{sweep}}(T,m)$ when $\delta_{m,T}$ is small, we may use the fact that the acceptance probability of the rejection sampler approaches $1$ as the bridge duration decreases to $0$. This fact implies that for small enough $t$, $C_{\texttt{rej}}(t)\sim \ctd  t$ and thus if $\delta_{m,T}<\cte $ as $T\to\infty$ for some constant $\cte $ then
\begin{equation}\label{eq:gibbs_sweep_cost2}
  C_{\texttt{sweep}}(T,m) \sim \ctd  T,
\end{equation}
for some  $\ctd >0$. For instance, upon setting $m=\lfloor T \rfloor$, the cost in \eqref{eq:gibbs_sweep_cost1} becomes $\mathcal{O}(T)$ as $T\to\infty$. Contrast this with \eqref{eq:comp_cost_psrs} to see that the relative gain in efficiency, $C_{\texttt{rej}}(T)/C_{\texttt{sweep}}(T,m)$ grows exponentially in $T$ and suggests that blocking is to be preferred for large enough $T$. However, this ignores the costs associated with mixing; we address this in the next subsection.


\subsection{Cost of multiple sweeps}

Direct comparison of the exponential cost $C_{\texttt{rej}}(T)$ of direct rejection sampling (as given by Proposition \ref{prop:comp_cost}), with the linear cost $C_{\texttt{sweep}}(T,m)$ of a single sweep of a blocking scheme (as given by \eqref{eq:gibbs_sweep_cost2}), does not capture the remnant dependency structure introduced by the blocking scheme. In addition we need to consider the number of sweeps required to render this dependency negligible. In order to do that we first introduce the following notion

\begin{definition}\label{def:convergence_rate_L2}\citep{roberts1997updating}
  The \emph{[$\Ltwo$-]convergence rate $\rho$} of a Markov chain $\{X^{(n)};n=1,\dots,N\}$ with the transition kernel $P$ and an invariant density $\pi$ is defined as the minimum number for which for all square $\pi$-integrable functions $f$, and for all $r>\rho$
  \begin{equation*}
    \lVert P^nf-\pi(f) \rVert_{\Ltwo(\pi)}:=\int\left[P^nf(X^{(0)}) - \pi(f)\right]^2\pi(\dd X^{(0)}) \leq V_f r^n,
  \end{equation*}
  where $P^nf(X^{(0)}):=\mathbbm{E}_\pi[f(X^{(n)})|X^{(0)}]$, $\pi(f):=\mathbbm{E}_\pi[f(X)]$ and $V_f$ is a positive number that depends on $f$.
\end{definition}

We can now capture the cost of reducing the dependency on the past by considering the \emph{relaxation time}, denoted $\mathcal{T}=\mathcal{T}(T,m)$, and defined as:
\begin{equation}\label{eq:relaxation_time_vs_spectral_radius}
  \mathcal{T}= -\frac{1}{\log\left(\rho\right)}.
\end{equation}
It represents the time required by the underlying Markov chain to output a draw from its stationary distribution \citep{levin2017markov}. This makes it possible to compare $C_{\texttt{rej}}(T)$ with the expected computational cost of the blocked rejection sampler as follows:
\begin{equation}\label{eq:blocking_cost_template1}
  C_{\texttt{blocking}}(T,m):=
    \mathcal{T}(T,m)\cdot C_{\texttt{sweep}}(T,m).
\end{equation}
We will later consider the most appropriate choice of blocking scheme, and how to optimise $m$.

Instead of analysing the chain targeting the law $\condLaw$ it is sufficient to consider a related chain that targets the marginal law of the vector
\begin{equation}\label{eq:blocking_def_of_G}
\blockG:=(X_{k_1},\dots,X_{k_m})|(X_0,X_T)=\knots|(X_0,X_T),
\end{equation}
which we denote by $\GcondLaw$. To see this, notice that conditionally on the knots $\knots$ being distributed according to $\GcondLaw$, a path $X$ returned after a single Gibbs sweep of a blocking scheme is distributed exactly according to $\condLaw$. The object of interest becomes a Markov chain with a transition kernel $P$ denoting a single Gibbs sweep, and with stationary distribution $\pi=\GcondLaw$ \citep{roberts2001markov}.

Throughout, we additionally assume that the following condition holds, which makes the subsequent required calculations tractable.  
\begin{assumption}\label{as:gsn_blocks}
    The target law $\condLaw$ is such that $\blockG$ is a Gaussian process.
\end{assumption}
We discuss this key technical assumption in Section \ref{sec:numerical}, where we note that the established results seem to hold empirically more broadly.

Under Assumption \ref{as:gsn_blocks} and using either the lexicographic or checkerboard updating scheme, a single Gibbs step (i.e.\ an update $\blockG|_{\mathcal{B}_I}\sim\condLaw|_{\mathcal{B}_I\cap \{X_{k_1},\dots,X_{k_m}\}}(\cdot|\mathcal{K}_{-I})$) has a tractable, Gaussian transition density, and thus so does the entire Gibbs sweep $\blockG^{(n)}\mapsto\blockG^{(n+1)}$ with mean and covariance
\begin{equation*}
    \mu:=\mathbbm{E}[\blockG],\qquad\Sigma := \mathbbm{C}ov[\blockG].
\end{equation*}
As a consequence it is possible to explicitly characterise the transition kernel $P$, as follows.
\begin{lemma}\label{lem:n_step_transition_density}
    Under the lexicographic and checkerboard updating schemes, the $n$-step transition kernel $P^n$ of the Markov chain $\{\blockG^{(l)}\,;\,l=0,\dots\}$ is Gaussian, with mean and covariance matrix given respectively by:
    \begin{equation}\label{eq:blocking_mean_var}
        \mathbbm{E}[\blockG^{(l+n)}|\blockG^{(l)}]=B^{n}\blockG^{(l)}+(I-B)^{-1}(I-B^{n})b,\quad\mathbbm{C}ov[\blockG^{(l+n)}|\blockG^{(l)}]=\Sigma -B^n\Sigma(B^n)^\T,
    \end{equation}
    with $B\in\mathbbm{R}^{m\times m}$ and $b\in\mathbbm{R}^m$.
\end{lemma}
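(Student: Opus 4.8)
The plan is to analyze a single Gibbs sweep as a composition of elementary block updates, each of which is an affine-Gaussian map, and then use the affine structure to iterate. I will work with the chain $\{\blockG^{(l)}\}$ targeting $\GcondLaw$, which is Gaussian with mean $\mu$ and covariance $\Sigma$ by Assumption \ref{as:gsn_blocks}. A single block update $\blockG|_{\mathcal{B}_I}\sim\condLaw|_{\mathcal{B}_I\cap\knots}(\cdot\,|\,\knots_{-I})$ replaces the coordinates in block $I$ by a draw from their Gaussian conditional given the complementary coordinates. For a jointly Gaussian vector, this conditional has a mean that is an affine function of the conditioning coordinates and a covariance that does not depend on their values. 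Writing the coordinates in the order (block $I$, rest), the update therefore sends $\blockG\mapsto A_I\blockG + a_I + \xi_I$, where $A_I$ is the matrix that acts as the identity on $\knots_{-I}$ and, on $\knots_I$, maps to $\Sigma_{I,-I}\Sigma_{-I,-I}^{-1}$ applied to $\knots_{-I}$; $a_I$ is a constant vector; and $\xi_I$ is an independent centred Gaussian with covariance equal to the (degenerate) matrix that is zero outside the $I$-block and equals the Schur complement $\Sigma_{I,I}-\Sigma_{I,-I}\Sigma_{-I,-I}^{-1}\Sigma_{-I,I}$ on the $I$-block.

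Next I would compose these updates over one full sweep. Under the lexicographic scheme the sweep visits each singleton block $\{k_1\},\dots,\{k_m\}$ in order; under the checkerboard scheme it performs the two interlaced super-block updates. In either case the sweep is the composition of finitely many affine-Gaussian maps, hence itself affine-Gaussian: $\blockG^{(l+1)} = B\blockG^{(l)} + b + \eta^{(l)}$ for $B := A_{I_{\mathbbm{k}}}\cdots A_{I_1}$, a deterministic constant vector $b$, and $\eta^{(l)}$ a centred Gaussian noise (independent across sweeps) whose covariance is the push-forward accumulation of the individual $\xi_I$ covariances through the remaining maps. This already establishes $B\in\RR^{m\times m}$ and $b\in\RR^m$. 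Iterating the one-step recursion $n$ times gives
\begin{equation*}
    \blockG^{(l+n)} = B^n\blockG^{(l)} + \Bigl(\sum_{j=0}^{n-1}B^j\Bigr)b + \sum_{j=0}^{n-1}B^j\eta^{(l+n-1-j)},
\end{equation*}
so $\mathbbm{E}[\blockG^{(l+n)}\,|\,\blockG^{(l)}] = B^n\blockG^{(l)} + \bigl(\sum_{j=0}^{n-1}B^j\bigr)b$. It then remains to simplify $\sum_{j=0}^{n-1}B^j = (I-B)^{-1}(I-B^n)$, which is valid provided $I-B$ is invertible.

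For the covariance, rather than track the accumulated noise covariance directly I would use stationarity: $\GcondLaw$ is invariant for $P$, so if $\blockG^{(l)}\sim\mathcal{N}(\mu,\Sigma)$ then $\blockG^{(l+n)}\sim\mathcal{N}(\mu,\Sigma)$ as well. Taking the (unconditional) covariance of both sides of the iterated recursion, and using that $\blockG^{(l)}$ and the noise terms are independent, gives $\Sigma = B^n\Sigma(B^n)^\T + \mathbbm{C}ov\bigl[\sum_{j=0}^{n-1}B^j\eta^{(l+n-1-j)}\bigr]$; hence the accumulated-noise covariance equals $\Sigma - B^n\Sigma(B^n)^\T$. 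Since conditioning on $\blockG^{(l)}$ removes only the $B^n\blockG^{(l)}$ term (which is deterministic given $\blockG^{(l)}$) and leaves the noise sum untouched, $\mathbbm{C}ov[\blockG^{(l+n)}\,|\,\blockG^{(l)}] = \Sigma - B^n\Sigma(B^n)^\T$, as claimed.

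The main obstacles are two verifications I would want to make explicit. First, that each block-conditional covariance is genuinely nondegenerate on its block (so the Schur complements exist), which follows from $\Sigma\succ 0$; this in turn requires the joint law of the knots given the endpoints to be nondegenerate, true for the Gaussian diffusions considered. Second, the invertibility of $I-B$: this is equivalent to $1$ not being an eigenvalue of $B$, i.e.\ the sweep kernel having a unique invariant distribution / no invariant affine subspace other than as a point. I would justify this by noting that the spectral radius of $B$ is strictly less than $1$ — which is exactly the content needed for geometric convergence and is established (as $\rho<1$) in the results surrounding Definition \ref{def:convergence_rate_L2} — so $(I-B)^{-1}=\sum_{j\ge0}B^j$ converges and the geometric-series identity for $\sum_{j=0}^{n-1}B^j$ holds. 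Everything else is routine linear algebra on Gaussians and bookkeeping of the composition order.
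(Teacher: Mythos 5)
Your proposal is correct and follows essentially the same route as the paper's proof: establish that a sweep is an affine--Gaussian (AR(1)) update $\blockG^{(l+1)}=B\blockG^{(l)}+\epsilon$, iterate to get the mean, and recover the conditional covariance from stationarity via $\Sigma=B^n\Sigma(B^n)^\T+V^{(n)}$ rather than tracking the accumulated noise. You are merely more explicit than the paper in two places it leaves implicit (deriving $B$, $b$ as a composition of block-conditional updates, which the paper delegates to a citation of Roberts and Sahu, and noting that $\rho_{\texttt{spec}}(B)<1$ justifies the $(I-B)^{-1}$ factor).
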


Under the lexicographic or the checkerboard updating schemes $\{\blockG^{(l)}\,;\,l=0,\dots\}$ is an AR$(1)$ process, and so the spectral radius $\rho_{\texttt{spec}}(B)$ of the matrix $B$ must satisfy $\rho_{\texttt{spec}}(B)<1$ for the process to converge, and equals the $\mathcal{L}^2$-convergence rate \citep{amit1991rates}. This connection extends to the random updating scheme. In the following lemma we derive the spectral radius of each blocking scheme as a function of $m$ and $T$, which aids in optimising their parameterisation and analysing their scaling.  We denote by $\Lambda:=\Sigma^{-1}$ the precision matrix of $\blockG$ and define 
\begin{equation*}
    A:=I-\mbox{diag}\{\Lambda_{11}^{-1},\dots,\Lambda_{mm}^{-1}\}\Lambda.
\end{equation*}

\begin{lemma}\label{lem:general_L2_conv}\citep{roberts1997updating}
    Under the checkerboard and lexicographic updating schemes, the spectral radius of the matrix $B$ and the $\Ltwo$-convergence rate of a blocked rejection sampler coincide. More explicitly, under the checkerboard, lexicographic, and random updating schemes respectively the $\Ltwo$-convergence rates ($\rho_{\texttt{check}}$, $\rho_{\texttt{lex}}$, and $\rho_{\texttt{rand}}$ resp.) are equal to:
    \begin{align*}
         \rho_{m,T}:=\rho_{\texttt{check}}=\rho_{\texttt{lex}}=\rho_{\texttt{spec}}(B_{\texttt{lex}})=\rho_{\texttt{spec}}(B_{\texttt{check}})&=\lambda_{\texttt{max}}^2(A), &  \rho_{\texttt{rand}}&=\left[\frac{m-1+\lambda_{\texttt{max}}(A)}{m}\right]^m,
    \end{align*}
    where $\lambda_{\texttt{max}}(A)$ denotes the maximum eigenvalue of the matrix $A$ and where we write $B_{\texttt{check}}$ (resp. $B_{\texttt{lex}}$) to denote a matrix $B$ corresponding to the checkerboard (resp. lexicographic) updating scheme.
\end{lemma}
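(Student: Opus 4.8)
The plan is to push everything through the explicit $\mathrm{AR}(1)$ description of Lemma \ref{lem:n_step_transition_density}, reducing the claim to spectral properties of a single iteration matrix, and then to recognise that matrix as a classical Gauss--Seidel or Jacobi iteration operator associated with the precision $\Lambda$ of $\blockG$. The starting point is the structural observation that $\blockG$ is a Gaussian \emph{Markov} chain in its index (sampling the Markov diffusion at the anchors and conditioning on $X_0,X_T$ preserves the Markov property), so $\Lambda=\Sigma^{-1}$ is tridiagonal. This has two consequences. Under the interlaced odd/even partition of the checkerboard scheme the knots within one block are conditionally independent given the complementary block, so a checkerboard block update is a simultaneous single-site update of all odd (then all even) knots; and under the lexicographic scheme a sweep is the composition, in index order, of the $m$ single-site conditional updates. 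Setting $D:=\mathrm{diag}(\Lambda_{11},\dots,\Lambda_{mm})$ so that $A=I-D^{-1}\Lambda$, and writing $A=A_L+A_U$ for the splitting of $A$ into its strictly lower- and upper-triangular parts, the elementary formula for Gaussian full conditionals identifies $B_{\texttt{lex}}=(I-A_L)^{-1}A_U$ (the point Gauss--Seidel iteration matrix of $\Lambda$), $B_{\texttt{check}}$ as the two-block ``red--black'' Gauss--Seidel matrix, and $A$ itself as the corresponding Jacobi iteration matrix. Since by Lemma \ref{lem:n_step_transition_density} the relevant chain is Gaussian $\mathrm{AR}(1)$ with matrix $B$, the result of \citet{amit1991rates} relating the $\Ltwo$-convergence rate of a Gaussian autoregression to the spectral radius of its coefficient matrix gives $\rho_{\texttt{check}}=\rho_{\texttt{spec}}(B_{\texttt{check}})$ and $\rho_{\texttt{lex}}=\rho_{\texttt{spec}}(B_{\texttt{lex}})$, so it only remains to evaluate these spectral radii.

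That evaluation is the core of the argument. A tridiagonal $\Lambda$ is consistently ordered, so by the classical theory of such matrices both the point and the red--black Gauss--Seidel iteration matrices have spectral radius equal to $\rho_{\texttt{spec}}(A)^2$. It then remains to replace $\rho_{\texttt{spec}}(A)$ by $\lambda_{\texttt{max}}(A)$: conjugation by $D^{1/2}$ shows $A$ is similar to the symmetric matrix $I-D^{-1/2}\Lambda D^{-1/2}$, hence has real spectrum; positive-definiteness of $\Lambda$ gives $\lambda_{\texttt{max}}(A)<1$; and the odd/even reordering puts $A$ into block-antidiagonal form, whose nonzero eigenvalues occur in $\pm$ pairs, so $\rho_{\texttt{spec}}(A)=\lambda_{\texttt{max}}(A)$. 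This yields $\rho_{m,T}=\rho_{\texttt{check}}=\rho_{\texttt{lex}}=\lambda_{\texttt{max}}^2(A)$. For the random scheme, one step averages the single-site updates, whose matrices are $B_i=I-e_ie_i^\T+e_ia_i^\T$ (with $a_i^\T$ the $i$-th row of $A$, which has zero diagonal), so the mean one-step matrix is $\bar B=\tfrac1m\sum_iB_i=\tfrac{m-1}{m}I+\tfrac1m A$, with real eigenvalues $\tfrac{m-1+\lambda_j(A)}{m}$ whose maximum (positive once $m\ge2$) is $\tfrac{m-1+\lambda_{\texttt{max}}(A)}{m}$. The uniform random-scan sampler is reversible for the Gaussian target, and for Gaussian targets its $\Ltwo$-rate is determined by the action on linear functions, namely $\rho_{\texttt{spec}}(\bar B)$ \citep{roberts1997updating}; as a sweep consists of $m$ such steps, $\rho_{\texttt{rand}}=\rho_{\texttt{spec}}(\bar B)^m=[(m-1+\lambda_{\texttt{max}}(A))/m]^m$.

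I expect the main obstacle to be the step identifying the Gauss--Seidel spectral radius with the square of the Jacobi spectral radius: one has to verify that the (block-)tridiagonal structure inherited from the Markov property really does make $\Lambda$ consistently ordered with respect to both the natural ordering and the odd/even blocking, and then argue carefully for the reality and the $\pm$-symmetry of $\mathrm{spec}(A)$ that let one pass from $\rho_{\texttt{spec}}(A)$ to $\lambda_{\texttt{max}}(A)$. By comparison, identifying $B_{\texttt{lex}}$ and $B_{\texttt{check}}$ with the Gauss--Seidel operators, and the reversibility argument for the random scheme, are routine bookkeeping.
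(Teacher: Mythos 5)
Your proposal is correct and follows essentially the same route as the paper: the paper's proof simply cites \citet[Theorems 1, 2, 5 and Corollary 3]{roberts1997updating} after observing that the Markov property of $\blockG$ makes $\Lambda$ tridiagonal, and your argument is precisely an in-line reconstruction of those cited results via the Jacobi/Gauss--Seidel correspondence for consistently ordered matrices, the $\pm$-symmetry of $\mathrm{spec}(A)$, and the averaged one-step matrix $\bar B=\tfrac{m-1}{m}I+\tfrac1m A$ for the random scan. The only difference is one of presentation — you prove what the paper outsources to \citet{roberts1997updating} — and your verifications (tridiagonal hence consistently ordered under both orderings; block-antidiagonal form giving $\rho_{\texttt{spec}}(A)=\lambda_{\texttt{max}}(A)$) are exactly the structural facts the paper relies on to invoke those theorems.
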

$\lambda_{\texttt{max}}(A)$ can be found more explicitly by exploiting the close connection between the precision matrix $\Lambda$ and the matrix of partial correlations (given precisely in \eqref{eq:corr_to_prec_connect}, in Appendix \ref{sec:proofs}).

\begin{theorem}\label{thm:general_explicit_L2_conv}
    We have
    \begin{equation*}
        \lambda_{\texttt{max}}(A)=2\lvert c(\delta_{m,T})\rvert\cos\left(\frac{\pi}{m+1}\right),
    \end{equation*}
    with $c(\delta_{m,T}):=\mathbbm{C}orr(X_\delta,X_{2\delta}|X_0,X_{3\delta})$. In particular:
    \begin{equation}\label{eq:rho_mT}
        \rho_{m,T}
            =
                4c^2(\delta_{m,T})\cos^2\left(\frac{\pi}{m+1}\right),\qquad
        \rho_{\texttt{rand}}
            =
                \left[
                    \frac{
                        m
                        - 1
                        + 2\lvert
                            c(\delta_{m,T})
                        \rvert
                        \cos\left(
                            \frac{\pi}{m+1}
                        \right)
                    }
                    {m}
                \right]^m.
    \end{equation}
\end{theorem}

The form of $c(\delta_{m,T})$ will, in general, depend on the type of a Gaussian process that is being considered. In the following corollaries we present more explicit versions of the statements from Theorem \ref{thm:general_explicit_L2_conv} for the two choices of $\law$: scaled Brownian motion $\sigma W$, with $\sigma>0$; and, the Ornstein--Uhlenbeck process. Without loss of generality we centre the latter at $0$:
\begin{equation}\label{eq:blocking_sde_ou}
    \dd X_t = -\theta X_t\dd t + \sigma\dd W_t,\quad X_0=x_0,\quad t\in[0,T].
\end{equation}

\begin{corollary}\label{cor:bm_L2_conv}
  If $\law$ is the law of a scaled Brownian motion $\sigma W$, $\sigma > 0$, then:
    \begin{equation*}
        \rho_{m,T}
            =
                \cos^2\left(\frac{\pi}{m+1}\right),
            \quad
                \rho_{\texttt{rand}}
            =
                \left[
                    \frac{
                    m
                    - 1
                    + \cos\left(
                        \frac{\pi}{m+1}
                    \right)
                    }
                    {m}
                \right]^m.
    \end{equation*}
    In particular, independently of $T$, as $m\to\infty$
    \begin{equation*}
        \rho_{m,T}
            =
                1-\left(\frac{\pi}{m+1}\right)^2+\mathcal{O}(m^{-4}),
        \qquad\rho_{\texttt{rand}}
            =
                1-\frac{1}{2}\left(\frac{\pi}{m+1}\right)^2+\mathcal{O}(m^{-4}).
    \end{equation*}
\end{corollary}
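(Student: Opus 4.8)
The plan is to notice that, by Theorem~\ref{thm:general_explicit_L2_conv}, every quantity appearing in the statement ($\lambda_{\texttt{max}}(A)$, $\rho_{m,T}$, $\rho_{\texttt{rand}}$, and their expansions) is an explicit function of the single scalar $c(\delta_{m,T})=\mathbbm{C}orr(X_\delta,X_{2\delta}\mid X_0,X_{3\delta})$, with $\delta=\delta_{m,T}$. Since a scaled Brownian motion is a Gaussian process, Assumption~\ref{as:gsn_blocks} holds automatically and Theorem~\ref{thm:general_explicit_L2_conv} applies, so the corollary reduces to (i) evaluating this partial correlation when $\law$ is the law of $\sigma W$, and (ii) substituting the value and Taylor-expanding in $1/m$.

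For step (i): when $\law$ is the law of $\sigma W$ the vector $(X_0,X_\delta,X_{2\delta},X_{3\delta})$ is Gaussian with $\mathbbm{C}ov(X_s,X_t)=\sigma^2\min(s,t)$, so conditioning on the pair $(X_0,X_{3\delta})$ yields the law of a Brownian bridge on $[0,3\delta]$, whose covariance kernel is $\mathbbm{C}ov(X_s,X_t\mid X_0,X_{3\delta})=\sigma^2\,s(3\delta-t)/(3\delta)$ for $0\le s\le t\le 3\delta$. (If a self-contained argument is preferred, this kernel follows directly from the Gaussian conditioning formula applied to the $4\times 4$ covariance matrix above.) Evaluating at $(s,t)=(\delta,2\delta)$ gives conditional covariance $\sigma^2\delta/3$, while the conditional variances at $s=t=\delta$ and at $s=t=2\delta$ are both equal to $2\sigma^2\delta/3$; hence
\begin{equation*}
  c(\delta_{m,T})=\frac{\sigma^2\delta/3}{2\sigma^2\delta/3}=\frac{1}{2}.
\end{equation*}
This value is independent of $\delta$ (hence of $\delta_{m,T}$, and therefore of $T$) and of $\sigma$, the latter simply because correlations are scale-free; this is precisely why the resulting expressions carry no dependence on $T$.

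For step (ii): substituting $2\lvert c(\delta_{m,T})\rvert=1$ into Theorem~\ref{thm:general_explicit_L2_conv} gives at once $\lambda_{\texttt{max}}(A)=\cos(\pi/(m+1))$, $\rho_{m,T}=\cos^2(\pi/(m+1))$, and $\rho_{\texttt{rand}}=\big((m-1+\cos(\pi/(m+1)))/m\big)^m$. Writing $\theta_m:=\pi/(m+1)\to 0$, the expansion of $\rho_{m,T}$ follows from $\cos^2\theta_m=1-\sin^2\theta_m=1-\theta_m^2+\mathcal{O}(\theta_m^4)$. For $\rho_{\texttt{rand}}$ I would rewrite it as $(1-(1-\cos\theta_m)/m)^m$, take logarithms, and combine $\log(1-u)=-u+\mathcal{O}(u^2)$ with $1-\cos\theta_m=\tfrac{1}{2}\theta_m^2+\mathcal{O}(\theta_m^4)=\mathcal{O}(m^{-2})$ to obtain $\log\rho_{\texttt{rand}}=-(1-\cos\theta_m)+\mathcal{O}(m^{-5})$; exponentiating and re-expanding $1-\cos\theta_m$ then yields the stated asymptotic, both expressions being independent of $T$ as noted.

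There is no genuine obstacle here: the whole content is the clean cancellation $c(\delta_{m,T})=1/2$, which is what removes all dependence on $T$ (and on $\sigma$). The only points needing mild care are the remainder bookkeeping in the logarithmic expansion of $\rho_{\texttt{rand}}$ (to confirm every discarded term is $\mathcal{O}(m^{-4})$) and, if full self-containedness is desired, the one-line verification of the Brownian-bridge conditional covariance kernel from the Gaussian conditioning formula rather than quoting it.
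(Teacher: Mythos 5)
Your route is the same as the paper's: reduce everything to the single scalar $c(\delta_{m,T})$ via Theorem~\ref{thm:general_explicit_L2_conv}, compute $c(\delta_{m,T})=1/2$ from the Brownian-bridge conditional covariance (your kernel $\sigma^2 s(3\delta-t)/(3\delta)$ gives exactly the paper's $\tfrac{1}{3}\delta\sigma^2$ and $\tfrac{2}{3}\delta\sigma^2$), and Taylor-expand. The first display and the expansion of $\rho_{m,T}$ are fine.

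There is, however, a concrete problem in your last step: the expansion you describe for $\rho_{\texttt{rand}}$ does not produce the stated asymptotic. Carrying out your own recipe, $\log\rho_{\texttt{rand}}=-(1-\cos\theta_m)+\mathcal{O}(m^{-5})=-\tfrac{1}{2}\theta_m^2+\mathcal{O}(m^{-4})$, hence
\begin{equation*}
  \rho_{\texttt{rand}}=1-\frac{1}{2}\left(\frac{\pi}{m+1}\right)^2+\mathcal{O}(m^{-4}),
\end{equation*}
with a factor $\tfrac{1}{2}$ on the second-order term, whereas the corollary asserts $1-(\pi/(m+1))^2+\mathcal{O}(m^{-4})$. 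You cannot get from one to the other, so the sentence claiming your computation ``yields the stated asymptotic'' is unjustified; you should either flag the discrepancy or prove the corrected form. In fact your (correct) computation exposes a slip in the paper itself: the paper's proof replaces $m-1+\cos(\pi/(m+1))$ by $m\bigl[1-\tfrac{1}{m}(\pi/(m+1))^2+\mathcal{O}(m^{-5})\bigr]$, dropping the $\tfrac{1}{2}$ from $\cos\theta=1-\theta^2/2+\mathcal{O}(\theta^4)$. The corrected constant is the one consistent with Corollary~\ref{cor:ou_L2_conv}, which gives $\lim_m 2(1-\rho_{\texttt{rand}})(\pi/(m+1))^{-2}=1$ in the $\delta\to 0$ regime where the Ornstein--Uhlenbeck bridge degenerates to the Brownian one. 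The discrepancy is only in the constant, not the order, so $1-\rho_{\texttt{rand}}\asymp m^{-2}$ and Theorem~\ref{thm:number_of_needed_steps} are unaffected.
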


\begin{corollary}\label{cor:ou_L2_conv}
    If $\law$ is the law of the Ornstein--Uhlenbeck process \eqref{eq:blocking_sde_ou}, then:
    \begin{equation*}
        \rho_{m,T}
            =
                \cos^2\left(\frac{\pi}{m+1}\right)\sech^2(\theta \delta_{m,T}),
            \quad
                \rho_{\texttt{rand}}
            =
                \left[
                    \frac{
                    m
                    - 1
                    + \cos\left(\frac{\pi}{m+1}\right)\sech(\theta \delta_{m,T})
                    }
                    {m}
                \right]^m.
    \end{equation*}
    In particular, when $\delta_{m,T}=\delta$ is set to a constant, as $m,T\to\infty$:
    \begin{align*}
    \rho_{m,T} &= \sech^2(\theta\delta)\left[1-\left(\frac{\pi}{m+1}\right)^2+\mathcal{O}(m^{-4})\right], &
         \rho_{\texttt{rand}}
        & = e^{\sech(\theta \delta)-1}\left[1-\frac{(1-\sech(\theta \delta))^2}{2(m+1)}+\mathcal{O}(m^{-2})\right].
    \end{align*}
\end{corollary}
\begin{remark}
Results obtained by \citet{pit:she:1999:JTSA}, who studied the discrete-time first-order autoregressive process $\alpha_t = \phi \alpha_{t-1} + \eta_t$, $\eta_t \sim \mathcal{N}(0,\sigma^2)$, observed with Gaussian noise, are closely related to Corollaries \ref{cor:bm_L2_conv} and \ref{cor:ou_L2_conv}. In the context of their model, where $c(\delta_{m,T}) = \phi/(1+\phi^2)$, they derive the expression \eqref{eq:rho_mT} for $\rho_{m,T}$ as well as bounds on $\rho_{m,T}$ which exhibit the same asymptotic behaviour as in Corollary \ref{cor:ou_L2_conv}.
\end{remark}

We can now combine the above results with \eqref{eq:relaxation_time_vs_spectral_radius} to find the relaxation time:

\begin{theorem}\label{thm:number_of_needed_steps}
   Suppose we use one of the checkerboard, lexicographic, and random updating schemes. If $\law$ is the law of a scaled Brownian motion $\sigma W$, then we have:
    \begin{equation*}
        \mathcal{T}(m)=\mathcal{O}(m^2), \qquad m\to\infty.
    \end{equation*}
    If $\law$ is the law of the Ornstein--Uhlenbeck process in \eqref{eq:blocking_sde_ou}, and additionally the sequence $T(m)$ is chosen so that $m=\ctc T$ for some constant $\ctc>0$, then we have:
    \begin{equation*}
        \mathcal{T}(m)=\mathcal{O}(1), \qquad m\to\infty.
    \end{equation*}
\end{theorem}

\begin{remark}
 Note that if $\law$ is the law of the Ornstein--Uhlenbeck process in \eqref{eq:blocking_sde_ou} then Theorem \ref{thm:number_of_needed_steps} holds for only the sequence $T(m)$ where $m=\ctc T$, but in the case of scaled Brownian motion there is no such constraint; see Remark \ref{rem:oubetter}.
\end{remark}

\begin{remark}
 From the proof of Theorem \ref{thm:number_of_needed_steps}, one can show that for the Ornstein--Uhlenbeck process \eqref{eq:blocking_sde_ou}:
  \begin{equation*}
  	\mathcal{T}(m) = -\frac{1}{\log(\rho_{m,T})}\to -\frac{1}{2\log(\sech(\theta\delta))}, \qquad m\to\infty.
  \end{equation*}
  This provides insight into the influence of $\theta$ and $\delta$ on mixing.
\end{remark}

We can minimize the cost of blocking $C_{\texttt{blocking}}(T,m)$ over the remaining parameter, $m$, using Theorem \ref{thm:number_of_needed_steps} and \eqref{eq:blocking_cost_template1}. This leads to Theorem \ref{thm:main_thm_comp_cost}, which is the main result of this paper (as presented in Section \ref{sec:intro}, with accompanying proof in Appendix \ref{sec:proofs}).


\section{Numerical experiments}\label{sec:numerical}

Consider a target process defined to be the solution of the following stochastic differential equation (with law $\law$): 
\begin{equation}\label{eq:sine_diff}
  \dd X_t = (2-2\sin(8X_t))\dd t + \frac{1}{2}\dd W_t,\quad X_0=0,\quad t\in[0,T].
\end{equation}

This diffusion exhibits highly multimodal behaviour, and so in practice it is challenging to simulate trajectories of $\law$ (and in particular the conditioned bridge law $\condLaw$ over large time horizons). It is possible to simulate trajectories exactly by means of path-space rejection sampling (as detailed in Appendix \ref{sec:psrs}). However, $X$ \emph{is not} a Gaussian process (it violates Assumption \ref{as:gsn_blocks}), and so Theorem \ref{thm:main_thm_comp_cost} does not hold in a rigorous sense. As such \eqref{eq:sine_diff} makes an interesting case to investigate the practical limitations of Theorem \ref{thm:main_thm_comp_cost}. Because it is not an ergodic diffusion, out of the two theoretical results from Theorem \ref{thm:main_thm_comp_cost} the ones for the Brownian motion are expected to be more relevant.  As we show below, the empirical results would suggest the theory holds more broadly.

We consider six problems (increasing in difficulty) of simulating paths according to the laws $\law^{(T,x_0,x_T)}$, with parameters $\law^{(0.2,0,0.1)}$, $\law^{(0.4,0,0.85)}$, $\law^{(0.5,0,0.85)}$, $\law^{(1,0,0.95)}$, $\law^{(2,0,2.5)}$, $\law^{(4,0,4.85)}$. The values of the end-points were chosen by fixing $T$, simulating multiple paths according to \eqref{eq:sine_diff} and picking $x_T$ to be some point in the vicinity of the (largest) mode as these are the bridges we will most commonly be interested in. For $T=0.2$, the plotted paths resemble Brownian bridges, but as $T$ increases the non-linear dynamics become pronounced: the diffusion is effectively attracted to a ladder of values and it is repelled at the intermediate points, leading to multimodal behaviour of the trajectories. Drawing paths from the last three laws using path-space rejection sampling but without blocking (an \emph{unmodified rejection sampler}) is computationally infeasible.

For each of the six examples we ran a blocked rejection sampler with checkerboard updating scheme for $10^5$ iterations and with various numbers of knots. For the first three problems we also employed an unmodified rejection sampler. We recorded the time required to sample a single path (which for a blocked rejection sampler is counted as one execution of the inner \texttt{for-loop} of Algorithm \ref{alg:blocking}) and plotted it in Figure \ref{fig:blocking_sin_elapsed} against the number of used knots. Code sufficient for reproducing these results can be found at \url{https://github.com/mmider/blocking}. 

\begin{figure}[ht]
\begin{center}
\includegraphics[width=0.8\textwidth]{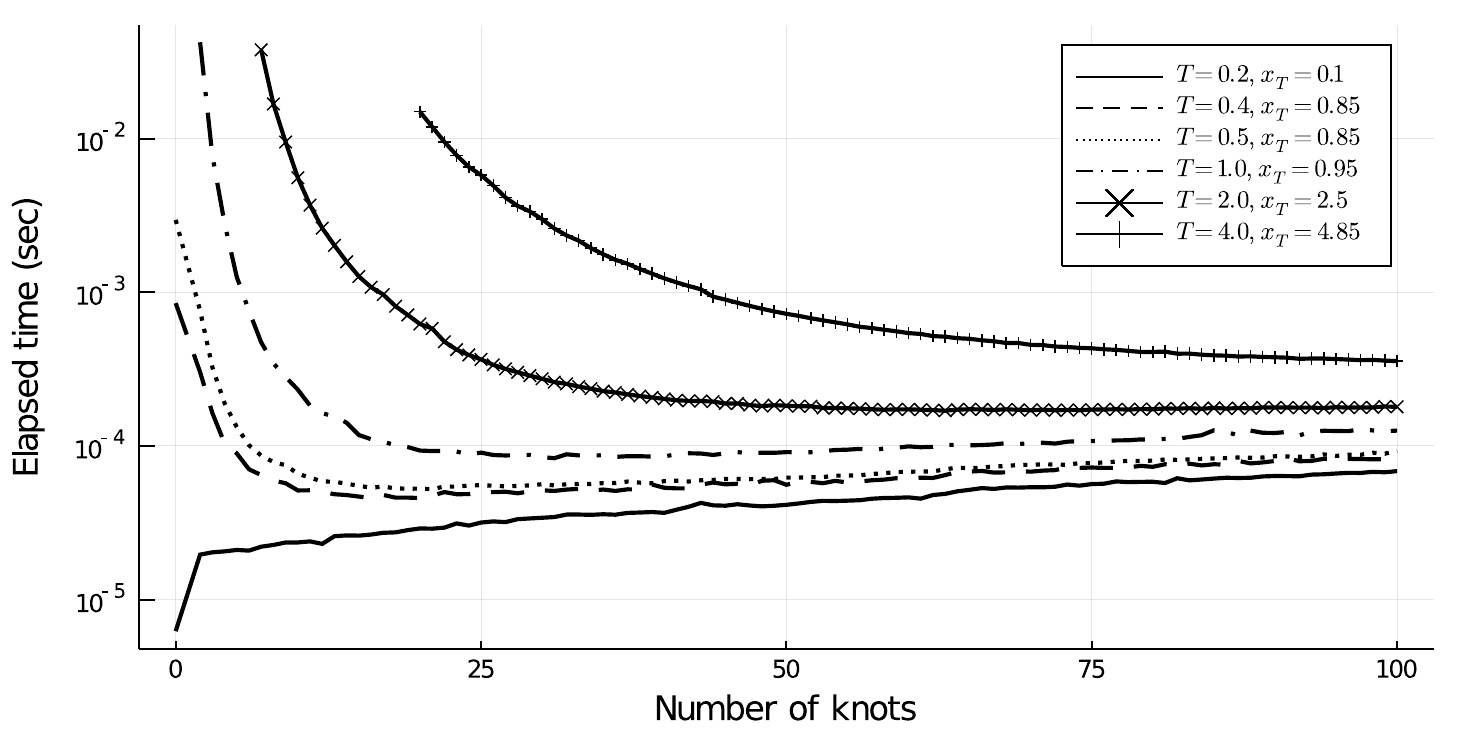}
\caption{Time (in seconds; log-transformed) required to sample a single path of the sine diffusion \eqref{eq:sine_diff} as a function of the number of used knots.} \label{fig:blocking_sin_elapsed}
\end{center}
\end{figure}

For $T=0.2$ the unmodified rejection sampler clearly outperforms any blocking scheme. This is unsurprising as paths under $\law^{(0.2,0,0.1)}$ closely resemble Brownian bridges (and indeed every diffusion behaves as a drifted Brownian motion on a small-enough time-scale). However, as $T$ increases, this pattern changes and blocking reduces the cost of obtaining any single sample path. In particular, notice a steep, exponential reduction in cost that is especially pronounced for $(T,x_T)=(1,0.95)$ (this would be illustrated even more emphatically by $(T,x_T)=(2,2.5)$ and $(T,x_T)=(4,4.85)$ had the corresponding experiments with a lower number of knots been run; however, their costs are prohibitively high and had to be omitted).

Figure \ref{fig:blocking_sin_elapsed}, though helpful in confirming Proposition \ref{prop:comp_cost}, does not take into account the cost due to decreased speed of mixing---the main motivation for the developments presented in Section \ref{sec:computational_cost_of_blocking}. To incorporate also this cost we plot in Figure \ref{fig:blocking_sin_ta_ess} the time-adjusted effective sample size (taESS), with
\[
\text{taESS}:=[\text{effective sample size}]/[\text{elapsed time in seconds to sample an entire chain}]
\]
(and ESS was computed according to \citet[Section 11.5]{gelman2013bayesian}) against the \mbox{(half-)} length of blocks (i.e.~$\delta_{m,T}$). As defined in Figure \ref{fig:blocking_sin_ta_ess}, taESS is approximately equal to a number of independent samples that can be drawn in one second. Clearly, the larger taESS is the more efficient the algorithm is.

\begin{figure}[ht]
\begin{center}
\includegraphics[width=0.8\textwidth]{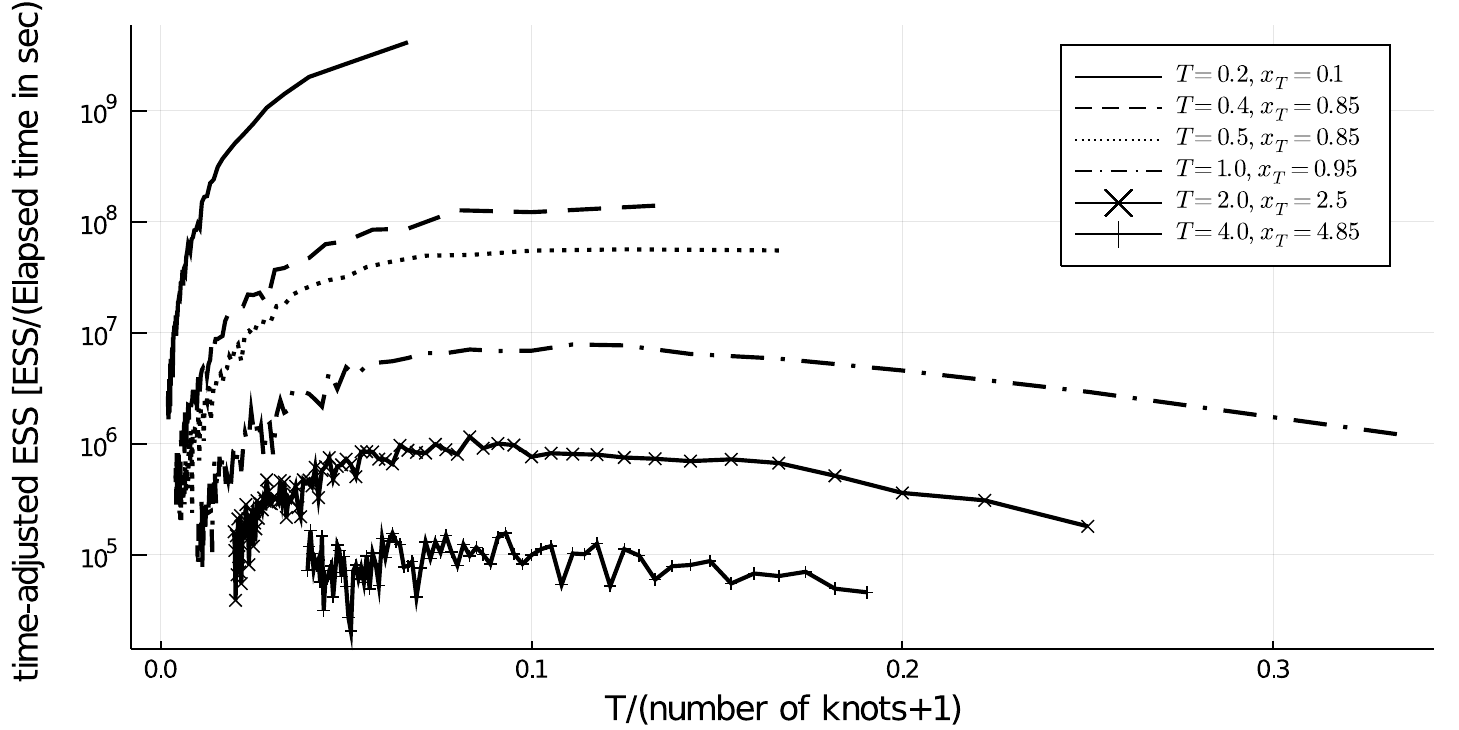}
\caption{Time-adjusted effective sample size 
vs half-length of blocks (i.e.~$\delta_{m,T}$).} \label{fig:blocking_sin_ta_ess}
\end{center}
\end{figure}

First, for any experiment we expect there to be a point for which increasing the number of knots any further will only lead to a decrease in taESS---this corresponds to all costs being dominated by the cost due to a slowdown in mixing and it is clearly illustrated by sharp dips of curves on the left side of Figure \ref{fig:blocking_sin_ta_ess}. Second, for examples for which the target law is sufficiently different from the law of Brownian bridges we expect that some level of blocking will improve the overall computational cost. This is also confirmed by the declines of taESS curves toward the right side of Figure \ref{fig:blocking_sin_ta_ess}. We note that under the most difficult sampling regimes it was impractical to run the algorithm with even fewer blocks due to excessive execution times---had the examples been run and the curves continued, the decline in performance would have been even starker. Additionally, Figure \ref{fig:blocking_sin_ta_ess} is suggestive of there being an optimal value of $\delta_{m,T}$ (somewhere around $\delta_{m,T}\approx 0.1$), that is almost independent of $T$ and $m$ and that yields the highest taESS in each experiment. This is consistent with the results of Section \ref{sec:computational_cost_of_blocking}, where an optimal number of knots was found to be $m= \ctc T$ for some $\ctc >0$, which implies the claim about the dependence of the optimal $\delta_{m,T}$ on $T$ and $m$.

Finally, we verify the bound from \eqref{eq:cost_of_blocking_explicit} empirically. To this end, notice that taESS$^{-1}$ is approximately equal to the amount of time needed to obtain a single independent sample. This is consistent with the characterisation of the computational cost of a blocked rejection sampler as given in \eqref{eq:blocking_cost_template1}. 
Theorem \ref{thm:main_thm_comp_cost} asserts that this cost scales at a cubic rate in the duration of the bridge, so long as $\delta_{m,T}$ is set to a constant when $T\to\infty$. Consequently, taESS$(T)$ should be at most a cubic function of $T$ and if plotted on a log-log scale, this would be equivalent to taESS$(T)$ tracking some line with slope $3$. Figure \ref{fig:blocking_sin_cost_pred} gives this precise plot, showing that the prediction \eqref{eq:cost_of_blocking_explicit} is indeed satisfied.

\begin{figure}[ht]
\begin{center}
\includegraphics[width=0.8\textwidth]{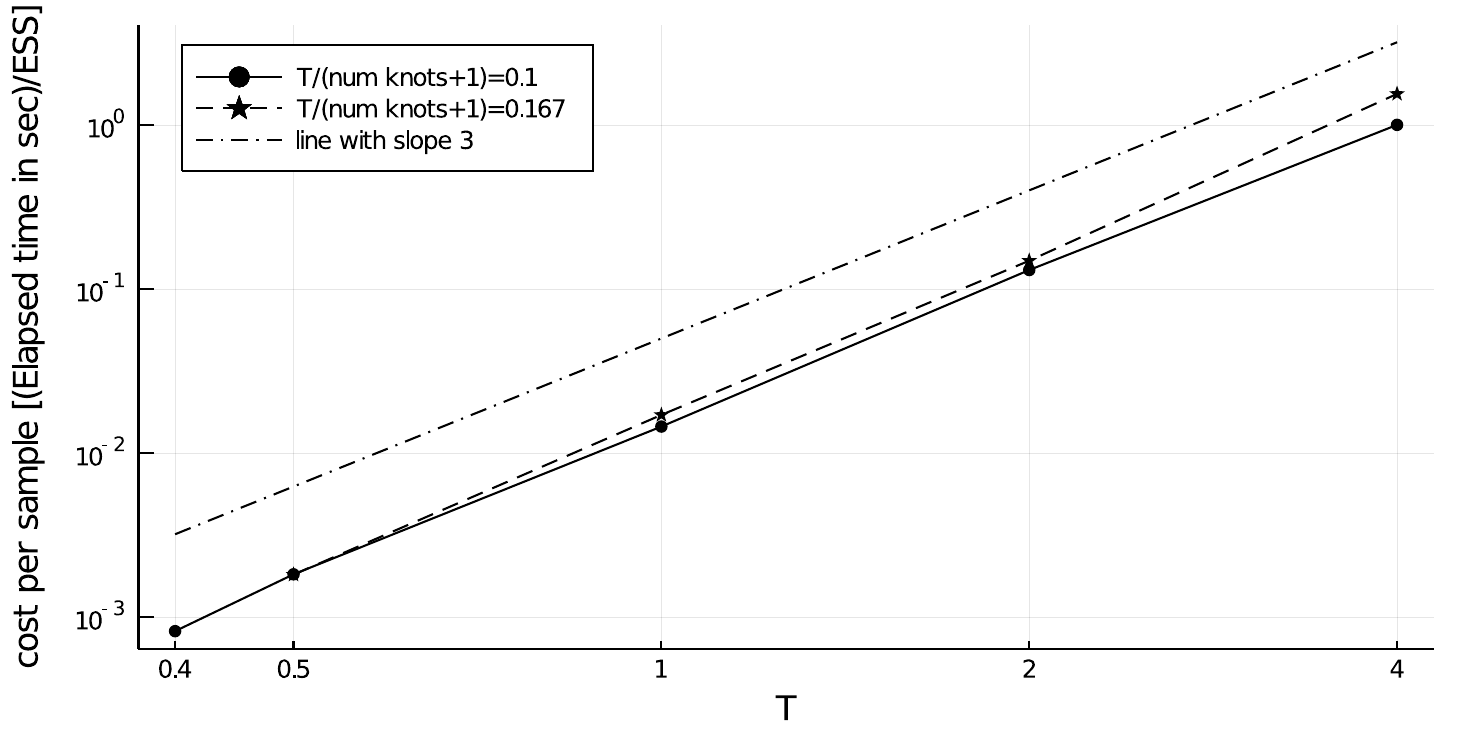}
\caption{Computational cost as a function of time for the sine example.} \label{fig:blocking_sin_cost_pred}
\end{center}
\end{figure}


\section{Discussion}\label{sec:discussion}

In this article we have analysed and provided practical guidance for using blocking schemes when conducting Bayesian inference for discretely observed diffusions. We achieved this by studying the computational cost of diffusion bridge sampling algorithms. We have shown rigorously that the computational cost of rejection sampling on path-space (modified with blocking) targeting the law of scaled Brownian motion scales as $\mathcal{O}(T^3)$ as $T\to\infty$, and as $\mathcal{O}(T)$ in the case of the Ornstein--Uhlenbeck process, so long as the number of equidistant anchors is $m=\ctc T$ (for some $\ctc >0$). In Remark \ref{rem:oubetter} we discussed the practicality of exploiting the computational saving achievable in the case of the Ornstein--Uhlenbeck process. Furthermore, using the example of a non-linear sine diffusion we provide empirical evidence which would suggest that the conclusions about Brownian motion hold also for non-ergodic diffusions outside of a restrictive class of Gaussian processes.

Our theory indicates that choosing too few knots results in the computational cost being dominated by the exponential cost for imputing diffusion bridges between successive knots (see Proposition \ref{prop:comp_cost}). As such our guideline of choosing $m=\ctc T$, (for some $\ctc >0$) is useful for ensuring the robustness of blocking schemes and a reasonable heuristic for practitioners. Note that although choosing too many knots is likely to be penalized less than choosing too few, choosing an excessive number of knots can negatively impact the mixing of the underlying chain.  

Naturally, for more general target laws $\law$ it might be useful to consider using irregularly spaced anchors (and so relaxing Assumption \ref{as:equidistant_grid}). Heuristically, we may wish to place more knots in areas in which the proposal law does not approximate the target law well. Developing more general theory to support the use of an irregular spacing of anchors is likely to require more knowledge of the specific diffusion under study. Of course, from a methodological perspective this motivates future research looking at how to place knots by assessing proposal-target discrepancy, or developing adaptive schemes.

Finally, it is worth recalling that within the context of Bayesian inference for discretely observed diffusion processes, the full chain in this setting is a Gibbs sampler that alternates between updating the unknown parameters and imputing the unobserved path. Since the mixing time of the unobserved path influences the mixing time of the parameter chain, then in light of the work in this paper it may as a future extension also be possible to study the mixing behaviour of the parameter chain.


\appendix
\setcounter{equation}{0}
\renewcommand{\theequation}{\thesection.\arabic{equation}}


\section{Rejection sampling on path-space}\label{sec:psrs}

In this article we have restricted our attention to the class of diffusion bridges which can be sampled by means of \emph{path-space rejection sampling}. In particular, to sample from $\condLaw$ we sample trajectories from an accessible and absolutely continuous \emph{proposal law} (denoted $\mathbbm{Q^*}$), and accept with probability proportional to the Radon-Nikod\'{y}m derivative of $\condLaw$ to $\mathbbm{Q^*}$ \citep{beskos2005exact, beskos2006retrospective, beskos2008factorisation, b:pjr16}. To find an appropriate $\mathbbm{Q^*}$ we impose the following common assumption \citep[Section 4.4]{kloeden1992numerical}
\begin{assumption}\label{as:lamperti}
    There exists $\eta:\RR^d\to\RR^{d}$ such that $\nabla\eta = \sigma^{-1}$.
\end{assumption}
Under Assumption \ref{as:lamperti} the  process $Y:=\{\eta(X_t),t\in[0,T]\}$ satisfies the following stochastic differential equation,
\begin{equation*}
    \dd Y_t = \alpha(Y_t)\dd t + \dd W_t,\quad Y_0=y_0:=\eta(x_0),\quad t\in[0,T],
\end{equation*}
for a known, closed-form drift $\alpha$. With unit volatility, the law of $Y$ is now absolutely continuous with respect to Brownian motion, and so Brownian motion is a viable proposal law for sampling from $\mathbbm{P}$ (and the law induced by the Brownian bridge is a viable proposal for $\condLaw$).

In order to avoid unnecessary inflation of notation we assume throughout the article that $\sigma\equiv 1$ in \eqref{eq:master_sde}, so that $\alpha\equiv b$, $\eta$ becomes an identity map, and $X\equiv Y$. The general case of $\sigma$ (that satisfies Assumption \ref{as:lamperti}) follows without additional effort. 

\begin{assumption} \label{as:alpha}
    $\alpha$ is at least once continuously differentiable.
\end{assumption}
\begin{assumption}\label{as:potential}
    There exists a potential function $A:\RR^d\to\RR$ such that $\nabla A = \alpha$.
\end{assumption}
\begin{assumption}\label{as:lower_bound}
   The function $\phi(y):=\frac{1}{2}\left(\lVert\alpha(y)\rVert^2 + \Delta A(y)\right)$ is bounded from below by some $\Phi:=\inf\{\phi(y):y\in\RR^d\}\in\RR$.
\end{assumption}
Under assumptions \ref{as:alpha}--\ref{as:lower_bound} we have \citep[Section 3]{beskos2005exact}:
\begin{equation}
\label{eq:pdef}
    \frac{\dd\condLaw}{\dd\mathbbm{Q^*}}\left(\{\eta^{-1}(Y_t),t\in[0,T]\}\right)\propto\exp\left\{ - \int_0^T(\phi(Y_t)- \Phi)\dd t \right\}=:\texttt{p}(Y)\leq 1.
\end{equation}

It follows that sampling from $\condLaw$ can be accomplished using Algorithm \ref{alg:psrs_simple}. Note that computing the integral in \eqref{eq:pdef} required for Algorithm \ref{alg:psrs_simple} can be achieved either (i) approximately, by simulating a candidate $Y^{\circ}$ over a fine mesh and computing the integral in \eqref{eq:pdef} numerically; or (ii) exactly, via an additional randomisation step that utilises a Poisson point process \citep{beskos2005exact, beskos2006retrospective, beskos2008factorisation}. We refer to these two methods as, respectively, \emph{approximate} and \emph{exact} path-space rejection samplers. If we contrast $U\sim\texttt{Bern}(\texttt{p}(Y))$ with that of $\widetilde{U}\sim\texttt{Bern}(\tilde{\texttt{p}}(Y))$ where $\tilde{\texttt{p}}$ is the (unbiased) estimator resulting from the additional randomisation step, then since $\mathbbm{E}[\tilde{\texttt{p}}]=\texttt{p}$ we have $U\overset{d}{=}\widetilde{U}$ and so $\mathbbm{V}(U)=\mathbbm{V}(\widetilde{U})$, and there is no additional introduction of variance from the use $\tilde{\texttt{p}}$ in place of $\texttt{p}$ \citep{lat:etal:2011}.

\begin{algorithm}[H]
    \caption{Rejection sampling on path-space}\label{alg:psrs_simple}
        \While{\texttt{True}}{
            Draw $Y^{\circ}\sim\mathbbm{W}^{(T,\eta(x_0),\eta(x_T))}$, i.e. a $d$-dimensional Brownian bridge joining $\eta(x_0)$ and $\eta(x_T)$ on $[0,T]$\;
            Draw $U\sim\texttt{Unif([0,1])}$\;
            \If{$U \leq \texttt{p}(Y^{\circ})$}{
                Set $X\leftarrow \{\eta^{-1}(Y^{\circ}_t),t\in[0,T]\}$\;
                \Return{$X$}
            }
        }
\end{algorithm}

To prove Proposition \ref{prop:comp_cost}, which quantifies the computational cost of Algorithm \ref{alg:psrs_simple}, we impose the following natural assumptions on the cost of simulating a proposal trajectory. 

\begin{assumption}\label{as:independence_of_cost}
    The cost of generating any proposal sample $X$ is independent of the value of $\texttt{p}(X)$ as defined in \eqref{eq:pdef}.
\end{assumption}

\begin{assumption}\label{as:x_cost}
    The cost $\mathbbm{c}(X)$ of simulating $X$ has expectation growing linearly in T: $\mathbbm{E}\left[\mathbbm{c}(X)\right]=\ctd  T$, $\ctd \in\RR_+$.
\end{assumption}

Assumptions \ref{as:independence_of_cost} and \ref{as:x_cost} are always satisfied if rejection sampling is performed with the approximate method described above, so long as the mesh width is kept constant as $T\to\infty$. For the exact method, Assumption \ref{as:independence_of_cost} will in general be violated (for instance, if the number of simulated Poisson points is $0$, then conditional on this information $\texttt{p}(X)=1$ a.s.) but for $T\to\infty$ it is a reasonable approximation. Assumption \ref{as:x_cost} is satisfied if $\phi$ is bounded.

We can now derive the cost of a single draw using a path-space rejection sampler as follows:
\begin{lemma}\label{lem:for_comp_cost}
  Under Assumptions \ref{as:lamperti}--\ref{as:x_cost}:
  \begin{equation*}
    C_{\texttt{rej}}(T) = \ctg \frac{q_T(x_0,x_T)}{p_T(x_0,x_T)} Te^{-\Phi T},
  \end{equation*}
  where $\ctg >0$ is some constant independent of $T$, $p_T(x_0,x_T)$ is the transition density under $\law$ for going from $x_0$ to $x_T$ over the interval $[0,T]$ and $q_T(x_0,x_T)$ is the same transition density, but under the proposal law $\mathbbm{Q}$ instead.
 \end{lemma}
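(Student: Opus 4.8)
\emph{Proposed approach.} The plan is to factor the expected running time of Algorithm~\ref{alg:psrs_simple} as (expected cost of generating one proposal) $\times$ (expected number of proposals before acceptance), and then to evaluate the latter — equivalently, the overall acceptance probability — by a change-of-measure argument that brings in the transition densities $p_T$ and $q_T$.

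First I would observe that a single pass through the \texttt{while}-loop accepts with probability $\bar{\texttt{p}}(T):=\mathbbm{E}_{\mathbbm{Q^*}}[\texttt{p}(Y)]$, which is strictly positive (since $\texttt{p}>0$ pathwise, $\phi$ being continuous under Assumption~\ref{as:alpha}) and $\le 1$ by \eqref{eq:pdef}; since the passes are i.i.d., the number $N$ of proposals generated before the algorithm returns is geometric with $\mathbbm{E}[N]=1/\bar{\texttt{p}}(T)$. Conditioning on the whole sequence of accept/reject decisions, Assumption~\ref{as:independence_of_cost} guarantees that the cost of generating the $k$-th proposal still has conditional mean $\ctd T$ (Assumption~\ref{as:x_cost}) irrespective of whether that proposal is the accepted one, so $\mathbbm{E}[\text{total cost}\mid\text{decisions}]=N\cdot\ctd T$; taking expectations (a renewal/Wald argument) gives
\[
  C_{\texttt{rej}}(T)=\frac{\ctd T}{\bar{\texttt{p}}(T)},
\]
and the problem reduces to evaluating $\bar{\texttt{p}}(T)$.

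For that I would use Girsanov's theorem together with It\^o's formula applied to the potential $A$ (legitimate under Assumptions~\ref{as:alpha}--\ref{as:potential}) to write $\frac{\dd\law}{\dd\mathbbm{Q}}(Y)=\exp\{A(Y_T)-A(Y_0)-\int_0^T\phi(Y_t)\dd t\}$, and then disintegrate both $\law$ and the Brownian law $\mathbbm{Q}$ over the terminal coordinate $Y_T$. Fixing $Y_T=x_T$ collapses the endpoint factor to the constant $e^{A(x_T)-A(x_0)}$, while the marginal normalisers of the disintegration are exactly $p_T(x_0,x_T)$ and $q_T(x_0,x_T)$, so that
\[
  \frac{\dd\condLaw}{\dd\mathbbm{Q^*}}(Y)=\frac{q_T(x_0,x_T)}{p_T(x_0,x_T)}\,e^{A(x_T)-A(x_0)}\exp\!\left\{-\int_0^T\phi(Y_t)\dd t\right\},
\]
which also pins down the proportionality constant left implicit in \eqref{eq:pdef}. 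Integrating this identity against $\mathbbm{Q^*}$ (whose left-hand side integrates to $1$) and using $\texttt{p}(Y)=e^{\Phi T}\exp\{-\int_0^T\phi(Y_t)\dd t\}$ then yields
\[
  \bar{\texttt{p}}(T)=e^{\Phi T}\,\frac{p_T(x_0,x_T)}{q_T(x_0,x_T)}\,e^{-(A(x_T)-A(x_0))}.
\]
Substituting back and setting $\ctg:=\ctd\,e^{A(x_T)-A(x_0)}>0$ — a constant independent of $T$ since the endpoints are fixed — gives precisely $C_{\texttt{rej}}(T)=\ctg\,[q_T(x_0,x_T)/p_T(x_0,x_T)]\,Te^{-\Phi T}$.

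The step I expect to be the main obstacle is the disintegration over $\{Y_T=x_T\}$: because this is a $\mathbbm{Q}$-null event, making the displayed identity for $\dd\condLaw/\dd\mathbbm{Q^*}$ rigorous requires working with regular conditional distributions — or invoking the standard factorisation of a path measure into its endpoint marginal times the corresponding bridge law — and it is exactly there that the density ratio $q_T/p_T$ enters. Everything else (Girsanov plus It\^o, the geometric/Wald bookkeeping, and absorbing $e^{A(x_T)-A(x_0)}$ into the constant) is routine.
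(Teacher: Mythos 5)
Your proposal is correct and follows essentially the same route as the paper: the Wald/geometric decomposition $C_{\texttt{rej}}(T)=\mathbbm{E}[\mathbbm{c}(X)]/\mathbbm{E}_{\mathbbm{Q^*}}[\texttt{p}(X)]$ (justified via Assumptions~\ref{as:independence_of_cost} and \ref{as:x_cost}), followed by inserting the $A(X_T)-A(X_0)$ terms and using the factorisation of the path law into its endpoint marginal times the bridge law to identify $\mathbbm{E}_{\mathbbm{Q^*}}[\texttt{p}(X)]=\cth\,[p_T(x_0,x_T)/q_T(x_0,x_T)]\,e^{\Phi T}$. The disintegration step you flag as the main obstacle is exactly where the paper invokes \citet[Eq (3.1)]{dacunha1986estimation}, so your treatment matches theirs.
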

\begin{proof}
  Denote by $X^{(i)}$, $i\in\{1,2,\dots\}$, independent samples from $\mathbbm{Q^*}$ and by $\mathbbm{c}(X^{(i)})$ the cost of sampling path $X^{(i)}$, $i\in\{1,2,\dots\}$. Rejection sampling requires a geometrically distributed number of simulations (with a randomly distributed parameter at each trial), so its expected cost is
    \begin{align}
            \mathcal{C}_{\texttt{rej}}(T) &:= \mathbbm{E}\left[\sum_{i=1}^\infty \left(\sum_{j=1}^i\mathbbm{c}(X^{(j)})\right) \cdot \texttt{p}(X^{(i)})\prod_{j=1}^{i-1}\left(1-\texttt{p}(X^{(j)})\right)\right] \notag\\
            &= \mathbbm{E}\left[\mathbbm{c}(X^{(1)})\right]\sum_{i=1}^\infty i \mathbbm{E}\left[ \texttt{p}(X^{(i)}) \right]\prod_{j=1}^{i-1}\left(1-\mathbbm{E}\left[\texttt{p}(X^{(j)})\right]\right)\notag\\
            &=\frac{\mathbbm{E}\left[\mathbbm{c}(X)\right]}{\mathbbm{E}\left[\texttt{p}(X)\right]},\label{eq:Crej}
    \end{align}
    where the measures with respect to which the expectations above are taken should be clear from the context (and include Brownian bridge measures, products of Brownian bridge measures, and any additional randomness needed to simulate events of probability $\texttt{p}(X^{(i)})$).
    We now have:
        \begin{align}
            \mathbbm{E}_{\mathbbm{Q^*}}\left[\texttt{p}(X)\right] &= \mathbbm{E}_{\mathbbm{Q^*}}\left[ \exp\left\{ -\int_0^T(\phi(X_t)-\Phi)\dd t \right\} \right] \label{eq:exp_ar_1}\\
            &= \mathbbm{E}_{\mathbbm{Q^*}}\left[ \exp\left\{ \left[A(X_T) - A(X_0)\right] - \left[A(X_T) - A(X_0)\right] - \int_0^T (\phi(X_t)-\Phi)\dd t \right\} \right]\notag\\
            &= \exp\left\{ -A(X_T) + A(X_0) +\Phi T \right\}\frac{p_T(x_0,x_T)}{q_T(x_0,x_T)} \mathbbm{E}_{\mathbbm{Q^*}}\left[ \frac{\dd\law^\star}{\dd\mathbbm{Q^*}}(X)  \right] \notag\\
            &= \cth \frac{p_T(x_0,x_T)}{q_T(x_0,x_T)}e^{\Phi T}\label{eq:exp_ar_last},
        \end{align}
    where $\cth :=\exp\left\{ -A(x_T) + A(x_0)\right\}$ and where the third equality followed from \citet[Eq (3.1)]{dacunha1986estimation}. The result now follows by substituting \eqref{eq:exp_ar_last} into \eqref{eq:Crej} and noting that, by assumption \ref{as:x_cost}, we have $\mathbbm{E}\left[\mathbbm{c}(X)\right]=\ctd  T$. 
\end{proof}

To better understand the scaling with $T$ of the ratio of transition densities under the laws $\law$ and $\mathbbm{Q}$ in Lemma \ref{lem:for_comp_cost} we impose the following final assumption, which 
allows us to establish Lemmata \ref{lem:for_comp_cost2} and \ref{lem:l_negative} required for proving Proposition \ref{prop:comp_cost}.

\begin{assumption}\label{as:ergodicity}
  The target diffusion is ergodic and defined on $\RR^d$.
\end{assumption}

\begin{lemma}\label{lem:for_comp_cost2}
  Under Assumption \ref{as:ergodicity}, for
  \begin{equation*}
    f:T\to\frac{q_T(x_0,v)}{p_T(x_0,v)},\quad v\in\RR^d,
  \end{equation*}
  we have that $f(T)\sim T^{-d/2}$ as $T\to\infty$ and $d$ denotes the dimension of the process.
\end{lemma}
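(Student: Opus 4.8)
The plan is to split $f(T)=q_T(x_0,v)/p_T(x_0,v)$ into its numerator and denominator and treat each separately; only the denominator will require Assumption~\ref{as:ergodicity}. First I would recall that, after the Lamperti reduction and the normalisation $\sigma\equiv 1$ adopted in Appendix~\ref{sec:psrs}, the (unconditioned) proposal law $\mathbbm{Q}$ is that of standard $d$-dimensional Brownian motion, whose transition density is the explicit Gaussian
\[
q_T(x_0,v)=(2\pi T)^{-d/2}\exp\!\left(-\frac{\lVert v-x_0\rVert^2}{2T}\right).
\]
Multiplying by $T^{d/2}$ and sending $T\to\infty$, the exponential tends to $1$, so $q_T(x_0,v)\,T^{d/2}\to(2\pi)^{-d/2}$. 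Hence the entire $T^{-d/2}$ decay claimed for $f$ comes from the numerator, provided the denominator converges to a strictly positive, finite limit.

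The second step establishes precisely that limit for $p_T(x_0,v)$. Assumption~\ref{as:ergodicity} gives that the target diffusion on $\RR^d$ is ergodic and hence possesses a unique invariant probability measure; under the standing smoothness assumptions on the coefficients this measure has a continuous, everywhere-positive density $\pi$. I would then appeal to the standard convergence-to-equilibrium theory for ergodic diffusions to obtain $p_T(x_0,v)\to\pi(v)\in(0,\infty)$ for each fixed $x_0,v$. Combining the two steps yields
\[
\lim_{T\to\infty}\frac{f(T)}{T^{-d/2}}=\frac{(2\pi)^{-d/2}}{\pi(v)}\in(0,\infty),
\]
which is the asserted scaling $f(T)\sim T^{-d/2}$ (understood, as elsewhere in the paper, up to a positive multiplicative constant).

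The main obstacle is the pointwise statement $p_T(x_0,\cdot)\to\pi(\cdot)$: ergodicity yields convergence of the laws in total variation most directly, and this does not on its own force convergence of the corresponding densities at a fixed point $v$. To upgrade it I would use either (i) equicontinuity of the family $\{p_T(x_0,\cdot):T\ge 1\}$ near $v$, supplied by interior parabolic regularity for the Kolmogorov forward equation, combined with the $\mathcal{L}^1$ convergence of densities implied by total-variation convergence; or (ii) two-sided Aronson-type Gaussian heat-kernel bounds, which trap $p_T(x_0,v)$ between positive constants for large $T$ and permit passage to the limit along subsequences. Strict positivity of $\pi(v)$ follows from irreducibility of the diffusion.

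As a consistency check, this lemma is only ever invoked for the two Gaussian laws under study, and both can be handled without the general machinery: scaled Brownian motion is not ergodic and is therefore excluded by Assumption~\ref{as:ergodicity} (consistent with the Remark following Proposition~\ref{prop:comp_cost}), whereas for the Ornstein--Uhlenbeck process~\eqref{eq:blocking_sde_ou} the transition density is itself Gaussian with mean $x_0e^{-\theta T}\to 0$ and variance $\tfrac{\sigma^2}{2\theta}(1-e^{-2\theta T})\to\tfrac{\sigma^2}{2\theta}$, so $p_T(x_0,v)\to\pi(v)$, the $d$-fold product of $N(0,\sigma^2/(2\theta))$ densities, by direct inspection, and the conclusion follows.
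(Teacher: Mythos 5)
Your proof follows essentially the same route as the paper's: the explicit Gaussian proposal density supplies the $T^{-d/2}$ factor, while ergodicity sends $p_T(x_0,v)$ to the strictly positive stationary density, and the ratio inherits the claimed asymptotics. The paper's own argument is terser---it simply asserts $p_T(x_0,v)\to\hat p(v)$ without confronting the total-variation-versus-pointwise-density issue or the positivity of the limit, so the extra care you take there only makes the argument more complete.
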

\begin{proof}
  $q$ and $p$ are well-behaved densities which are bounded and bounded away from zero (see for example \cite{rogers1985smooth}). This implies that $f$ is continuous. As the target diffusion is ergodic, $p_T(x_0,v)\to \hat{p}(v)$ as $T\to\infty$, where $\hat{p}$ is the stationary density of the diffusion law. On the other hand $q_T(x_0,v)$ is just a Gaussian density with variance $T^{d}I$, which for $T\to\infty$ behaves as $\sim T^{-d/2}$. 
\end{proof}

\begin{lemma}\label{lem:l_negative}
  Assumption \ref{as:ergodicity} implies that $\Phi<0$.
 \end{lemma}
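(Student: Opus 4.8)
The plan is to exploit the gradient (Langevin) structure of the target. Since $\alpha=\nabla A$, the diffusion $\dd X_t=\alpha(X_t)\dd t+\dd W_t$ is reversible with formal invariant density proportional to $e^{2A}$, and ergodicity on $\RR^d$ (Assumption \ref{as:ergodicity}) forces this density to be a genuine probability density, i.e.\ $\int_{\RR^d}e^{2A(y)}\dd y<\infty$. Equivalently, writing $\psi:=e^{A}$ --- which is $C^2$ and strictly positive by Assumptions \ref{as:alpha}--\ref{as:potential} --- we have $\psi\in\Ltwo(\RR^d)$.

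Next I would record the identity linking $\phi$ to $\psi$. A direct computation gives $\Delta\psi=e^{A}\bigl(\Delta A+\lVert\nabla A\rVert^2\bigr)=2\psi\,\phi$, so that $\phi=\Delta\psi/(2\psi)$ pointwise on $\RR^d$. In particular, $\Phi\geq 0$ (recall $\Phi\in\RR$ by Assumption \ref{as:lower_bound}) is equivalent to $\Delta\psi\geq 0$ everywhere, i.e.\ to $\psi$ being subharmonic on $\RR^d$.

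I would then rule this out by a Liouville-type argument. If the nonnegative function $\psi$ is subharmonic, the solid sub-mean-value inequality yields, for every $x$ and every $r>0$, $\psi(x)\leq |B_r(x)|^{-1}\int_{B_r(x)}\psi\leq |B_r(x)|^{-1/2}\lVert\psi\rVert_{\Ltwo(\RR^d)}$ by Cauchy--Schwarz; letting $r\to\infty$ (so $|B_r(x)|\to\infty$) forces $\psi(x)=0$, contradicting $\psi>0$. Hence there exists $y_0\in\RR^d$ with $\Delta\psi(y_0)<0$, so $\phi(y_0)<0$, and therefore $\Phi=\inf_{y}\phi(y)\leq\phi(y_0)<0$.

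The step I expect to require the most care is the first one: justifying rigorously that ergodicity implies $\int_{\RR^d}e^{2A}<\infty$. This rests on the fact that a non-degenerate ergodic diffusion on $\RR^d$ admits a smooth, strictly positive invariant probability density which must solve the stationary Fokker--Planck equation, and that for a gradient drift $\alpha=\nabla A$ the (unique up to normalisation) positive solution is $e^{2A}$. Once this normalisability is in hand, the computation $\Delta\psi=2\psi\phi$ and the subharmonic mean-value argument are both elementary.
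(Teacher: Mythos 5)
Your proof is correct, but it takes a genuinely different route from the paper. The paper's argument is probabilistic and piggybacks on machinery already established for Proposition \ref{prop:comp_cost}: the identity \eqref{eq:exp_ar_last} expresses the expected acceptance probability as $\cth\,\frac{p_T(x_0,x_T)}{q_T(x_0,x_T)}e^{\Phi T}$, and Lemma \ref{lem:for_comp_cost2} gives $p_T/q_T\sim T^{d/2}$; since the left-hand side is a probability and hence bounded by $1$, the factor $T^{d/2}e^{\Phi T}$ must stay bounded, which forces $\Phi<0$. Your argument is instead analytic: you use the Langevin structure to identify the invariant density as $e^{2A}$ (so that ergodicity yields $e^{A}\in\Ltwo(\RR^d)$), observe that $\Delta e^{A}=2e^{A}\phi$ so that $\Phi\geq 0$ would make $e^{A}$ a positive subharmonic function, and kill this with the sub-mean-value inequality plus Cauchy--Schwarz. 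Both are sound. Your version is self-contained and actually proves the slightly sharper statement that $\phi$ is strictly negative somewhere; the step you rightly flag as delicate --- that ergodicity forces the $\sigma$-finite invariant measure $e^{2A}\dd x$ to be normalisable --- needs uniqueness (up to scaling) of the invariant measure for this elliptic, irreducible (hence Harris recurrent) diffusion, which is standard but is heavier machinery than the paper invokes. The paper's version is shorter given its surrounding lemmas, but leans on the ratio asymptotics of Lemma \ref{lem:for_comp_cost2} and the change-of-measure identity from \citet{dacunha1986estimation}, so it is not more elementary in any absolute sense; it is simply better amortised against results the paper needs anyway.
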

 \begin{proof}
From Lemma \ref{lem:for_comp_cost2} we have that the RHS of \eqref{eq:exp_ar_last} is $\sim T^{d/2}e^{\Phi T}$ for $T\to\infty$. As the LHS of \eqref{eq:exp_ar_last} represents an expected probability, we must have $\Phi < 0$ for this expression to take values in $[0,1]$. 
\end{proof}

We are now in a position to prove Proposition \ref{prop:comp_cost}.

\begin{proof}[Proof of Proposition \ref{prop:comp_cost}]
  This follows directly by combining Lemmata \ref{lem:for_comp_cost}, \ref{lem:for_comp_cost2} and \ref{lem:l_negative}. 
\end{proof}


\section{Proofs}\label{sec:proofs}
\setcounter{equation}{0}

\begin{proof}[Proof of Lemma~\ref{lem:n_step_transition_density}]
    The chain $\{\blockG^{(l)}\,;\,l=0,\dots\}$ coincides with the chains considered in \cite{roberts1997updating}. In particular, the $1$-step transition kernel under lexicographic and checkerboard updating schemes is stated explicitly as \citet[Lemma 1]{roberts1997updating}. We provide a proof for completeness.
    
    $\{\blockG^{(l)}\,;\,l=0,\dots\}$ behaves like an AR$(1)$ process, therefore
    \begin{equation*}
        \blockG^{(l+1)}=B\blockG^{(l)}+\epsilon,\qquad \epsilon\sim\mathcal{N}(b,V),
    \end{equation*}
    for some $B$, $b$, and $V$ and
    \begin{equation}\label{eq:AR_process}
        \blockG^{(l+n)}=B^n\blockG^{(l)}+\epsilon^{(n)},\qquad \epsilon^{(n)}\sim\mathcal{N}(b^{(n)},V^{(n)}),
    \end{equation}
    with $b^{(n)}:=(I+B+\dots+B^{n-1})b=(I-B)^{-1}(I-B^{n})b$ and some $V^{(n)}$ that we are about to derive. Under either scheme $b$ and $B$ can be found in closed form (which we omit for brevity). If the chain has reached stationarity, i.e. if $\blockG^{(l)}\sim\mathcal{N}(\mu,\Sigma)$, then also $\blockG^{(l+n)}\sim\mathcal{N}(\mu,\Sigma)$. On the other hand, if $\blockG^{(l)}\sim\mathcal{N}(\mu,\Sigma)$, then by \eqref{eq:AR_process}
    \begin{equation*}
        \blockG^{(l+n)}|\blockG^{(l)}
            \sim
                \mathcal{N}\left(
                    B^{n}\blockG^{(l)}
                    + (I-B)^{-1}(I-B^{n})b ,
                    B\Sigma B^\T
                    + V^{(n)}
                \right).
    \end{equation*}
    Consequently:
    \begin{equation*}
        \Sigma = B^n\Sigma (B^n)^\T + V^{(n)},
    \end{equation*}
    and this yields \eqref{eq:blocking_mean_var}. 
\end{proof}

\begin{proof}[Proof of Lemma \ref{lem:general_L2_conv}]
    Since the chain $\{\blockG^{(l)}\,;\,l=0,\dots\}$ coincides with the chains considered in \cite{roberts1997updating}, the statement of \citet[Theorem 1]{roberts1997updating} applies under checkerboard and lexicographic updating schemes: i.e. the $\Ltwo$ convergence rates under the two regimes are given by $\rho_{\texttt{check}}=\rho_{\texttt{spec}}(B_{\texttt{lex}})$ and $\rho_{\texttt{lex}}=\rho_{\texttt{spec}}(B_{\texttt{lex}})$ respectively. Due to tridiagonal structure of the precision matrix $\Lambda$ (which follows from the Markov property of the process $\blockG$; see also a short explanation in the proof of Theorem \ref{thm:general_explicit_L2_conv} that leads up to \eqref{eq:matrix_A}), \citet[Corollary 3]{roberts1997updating} implies that the two spectral radii coincide, i.e. $\rho_{\texttt{spec}}(B_{\texttt{lex}})=\rho_{\texttt{spec}}(B_{\texttt{check}})$. By the same token, \citet[Theorem 5]{roberts1997updating} applies as well, yielding $\rho_{\texttt{spec}}(B_{\texttt{check}})=\lambda_{\texttt{max}}^2(A)$. Finally, the $\Ltwo$ convergence rate of the random updating scheme follows from \citet[Theorem 2]{roberts1997updating}. 
\end{proof}

\begin{proof}[Proof of Theorem \ref{thm:general_explicit_L2_conv}]
The precision matrix $\Lambda$ of any random vector $\blockG$ with non-degenerate covariance matrix can be related to a matrix of partial correlations via \citep[p.~130]{lauritzen1996graphical}:
\begin{equation}\label{eq:corr_to_prec_connect}
    \mathbbm{C}orr(\blockG^{[i]}, \blockG^{[j]}|\blockG\backslash\{\blockG^{[i]},\blockG^{[j]}\})=-\frac{\Lambda^{[i,j]}}{\sqrt{\Lambda^{[i,i]}\Lambda^{[j,j]}}}.
\end{equation}
By the definition of $\blockG$ in \eqref{eq:blocking_def_of_G}, it is easy to see that $\mathbbm{C}orr(\blockG^{[i]}, \blockG^{[j]}|\blockG\backslash\{\blockG^{[i]},\blockG^{[j]}\})=0$ whenever $|i-j|>1$; that by symmetry $\Lambda^{[i,i+1]}=\Lambda^{[i+1,i]}$, $(i=1,\dots,m)$; and that $\Lambda^{[i,i]}=\Lambda^{[j,j]}$, $(i,j=1,\dots,m)$, because $\mathbbm{V}ar(\blockG^{[i]}|\blockG\backslash\blockG^{[i]})=(\Lambda^{[i,i]})^{-1}$, $(i=1,\dots,m)$ \citep[p.296]{roberts1997updating}.
In addition, under Assumption \ref{as:gsn_blocks}, the covariance matrix depends only on time and not on the state variable, thus combining this with Assumption \ref{as:equidistant_grid}: $\mathbbm{C}orr(\blockG^{[i]}, \blockG^{[i+1]}|\blockG\backslash\{\blockG^{[i]},\blockG^{[i+1]}\})=:c(\delta_{m,T})$, $(i=1,\dots,m-1)$.
Consequently, $\Lambda$ is a Toeplitz matrix whose non-zero entries are related via $\Lambda^{[i,i+1]}=\Lambda^{[i+1,i]}=-\Lambda^{[i,i]}c(\delta_{m,T})$, $(i=1,\dots,m)$.
The form of matrix $A$ now follows:
\begin{equation}\label{eq:matrix_A}
    A = \left (
\begin{matrix}
0               & c(\delta_{m,T})   & 0                 & \dots             & 0         \\
c(\delta_{m,T}) & 0                 & c(\delta_{m,T})   & \ddots            & \vdots    \\
0               & \ddots            & \ddots            & \ddots            & 0         \\
\vdots          & \ddots            & c(\delta_{m,T})   & 0                 & c(\delta_{m,T}) \\
0               & \dots             & 0                 & c(\delta_{m,T})   & 0
\end{matrix}
\right ).
\end{equation}
The eigenvalues of Toeplitz matrices may be found in closed form \citep{smith1985numerical,kulkarni1999eigenvalues} and in particular, those of matrix $A$ are given by
\begin{equation*}
    -2c(\delta_{m,T})\cos\left(\frac{\pi l}{m+1}\right),\qquad l=1,\dots,m.
\end{equation*}
Depending on the sign of $c(\delta_{m,T})$ the maximal eigenvalue of $A$ is therefore given by:
\begin{equation*}
    \lambda_{\texttt{max}}(A)=
    \begin{cases}
        -2c(\delta_{m,T})\cos\left( \frac{\pi m}{m+1} \right)=2c(\delta_{m,T})\cos\left( \frac{\pi}{m+1} \right),& \mbox{if } c(\delta_{m,T})>0,\\
        -2c(\delta_{m,T})\cos\left( \frac{\pi}{m+1} \right) & \mbox{if } c(\delta_{m,T})<0,
    \end{cases}
\end{equation*}
and the result concerning $\lambda_{\texttt{max}}(A)$ follows. The remaining statements follow as well by substituting the expression for $\lambda_{\texttt{max}}(A)$ into Lemma \ref{lem:general_L2_conv}. 
\end{proof}

\begin{proof}[Proof of Corollary \ref{cor:bm_L2_conv}]
By Theorem \ref{thm:general_explicit_L2_conv}, only $c(\delta_{m,T})$ needs to be computed. This follows from standard properties of Brownian motion and bridges:
\begin{equation}\label{eq:proof_partial_correlation_bm}
    c(\delta_{m,T})= \frac{\mathbbm{C}ov\big(X_{\delta},X_{2\delta}\big|X_0,X_{3\delta}\big)}{\sqrt{\mathbbm{V}ar(X_{\delta}|X_0,X_{3\delta})\mathbbm{V}ar(X_{2\delta}|X_0,X_{3\delta})}}=\frac{\frac{1}{3}\delta\sigma^2}{\sqrt{\left(\frac{2}{3}\delta\sigma^2\right)^2}}=\frac{1}{2}.
\end{equation}
The asymptotic behaviour of $\rho_{m,T}$ follows immediately from Taylor expansion of $\cos^2(x)$ around $0$. For the asymptotic behaviour of $\rho_{\texttt{rand}}$, notice that by Taylor expansions of $\cos(x)$ around $0$, $\log(1-x)$ around $0$, and $\exp(x)$ around $0$ respectively:
\begin{equation*}
    \begin{split}
        \rho_{\texttt{rand}}  &= 
                    \exp\bigg\{
                        m\log\bigg[
                            m^{-1}\bigg\{m-1+\cos\bigg(\frac{\pi}{m+1}\bigg)\bigg\}
                            \bigg]
                        \bigg\} \\
                        &= 
                    \exp\bigg\{
                        m\log\bigg[
                            1-\frac{1}{2m}\bigg(\frac{\pi}{m+1}\bigg)^2 + \mathcal{O}(m^{-5})\bigg\}
                            \bigg]
                        \bigg\} \\
                        &=
                    \exp\bigg\{
                        -\frac{1}{2}\bigg(\frac{\pi}{m+1}\bigg)^2 + \mathcal{O}(m^{-4})
                        \bigg\} \\
                        &=
                    1-\frac{1}{2}\bigg(\frac{\pi}{m+1}\bigg)^2 + \mathcal{O}(m^{-4}).
    \end{split}
\end{equation*} 
\end{proof}

\begin{proof}[Proof of Corollary \ref{cor:ou_L2_conv}]
For the Ornstein--Uhlenbeck process we have:
\begin{equation*}
\begin{split}
    \mathbbm{C}ov\left[\left(\left.\begin{matrix}Y_s\\ Y_t\end{matrix}\right)\right|Y_0,Y_T\right]&=\frac{\sigma^2}{\theta}
\left(
\begin{matrix}
e^{-\theta s}\sinh(\theta s)& e^{-\theta t}\sinh(\theta s) \\
e^{-\theta t}\sinh(\theta s)& e^{-\theta t}\sinh(\theta t) \\
\end{matrix}
\right)\\
&\qquad-\frac{\sigma^2}{\theta}\left(
\begin{matrix}
e^{-\theta T}\frac{\sinh^2(\theta s)}{\sinh(\theta T)} & e^{-\theta T}\frac{\sinh(\theta s)\sinh(\theta t)}{\sinh(\theta T)}\\
e^{-\theta T}\frac{\sinh(\theta s)\sinh(\theta t)}{\sinh(\theta T)}& e^{-\theta T}\frac{\sinh^2(\theta t)}{\sinh(\theta T)} \\
\end{matrix}
\right),\quad 0<s<t<T,
\end{split}
\end{equation*}
which for the particular choice of $(s,t,T)=(\delta,2\delta,3\delta)$ simplifies to:
\begin{equation}\label{eq:covariance_ou}
    \mathbbm{C}ov\left[\left(\left.\begin{matrix}Y_{\delta}\\ Y_{2\delta}\end{matrix}\right)\right|Y_0,Y_{3\delta}\right]=\frac{\sigma^2}{\theta}\frac{\sinh^2(\theta \delta)}{\sinh(3\theta\delta)}
\left(
\begin{matrix}
2\cosh(\theta \delta)& 1 \\
1& 2\cosh(\theta \delta) \\
\end{matrix}
\right).
\end{equation}

It now follows from direct substitution of the relevant terms of \eqref{eq:covariance_ou} into the definition of the partial correlation that:
\begin{equation}\label{eq:OU_corr}
    c(\delta_{m,T})=\frac{\mathbbm{C}ov\big(X_{\delta},X_{2\delta}\big|X_0,X_{3\delta}\big)}{\sqrt{\mathbbm{V}ar(X_{\delta}|X_0,X_{3\delta})\mathbbm{V}ar(X_{2\delta}|X_0,X_{3\delta})}} = \frac{1}{2\cosh(\theta \delta)}.
\end{equation}
The remaining steps are analogous to the proof of Corollary \ref{cor:bm_L2_conv}. 
\end{proof}

\begin{proof}[Proof of Theorem \ref{thm:number_of_needed_steps}]
  The result follows easily from \eqref{eq:relaxation_time_vs_spectral_radius} and Corollaries \ref{cor:bm_L2_conv} and \ref{cor:ou_L2_conv}. For example if $\law$ denotes the law of a scaled Brownian motion then
  \[
  \mathcal{T}(m) = -\frac{1}{\log \rho_{m,T}} = -\left[\log\left(1-\frac{\pi^2}{(m+1)^2} + \mathcal{O}(m^{-4})\right)\right]^{-1} = -\left[\frac{\pi^2}{(m+1)^2} + \mathcal{O}(m^{-4})\right]^{-1} = \mathcal{O}(m^2)
  \]
  as $m\to\infty$, as required. Related expressions for the relaxation time of the Ornstein--Uhlenbeck process, and for the relaxation time relating to $\rho_{\texttt{rand}}$, follow similarly. 
\end{proof}

\begin{proof}[Proof of Theorem \ref{thm:main_thm_comp_cost}]
This follows from Theorem \ref{thm:number_of_needed_steps} and  \eqref{eq:blocking_cost_template1}. We minimize the cost of blocking $C_{\texttt{blocking}}(T,m)$ over the remaining hyperparameter $m$ and derive its final form as a function of $T$.
\begin{enumerate}
\item If $\law$ is the law of the Ornstein--Uhlenbeck process, then the restriction $T=\ctc m$ from Theorem \ref{thm:number_of_needed_steps} already constraints the choice of $m$. Additionally, under the choice of small enough $c_1$: $\delta_{m,T}<\cte $ for all $m$ and $T$, and thus, $C_{\texttt{blocking}}(T,m)\sim\mathcal{T}(m)T=\mathcal{O}(T)$.
\item On the other hand, if $\law$ is the law of the scaled Brownian motion, then the fastest growing contribution is that from the exponential term in \eqref{eq:blocking_cost_template1} and in order to annul it, we should take $m=\ctf T$. 
\end{enumerate} 
\end{proof}


\section*{Acknowledgements}
MM was supported as a doctoral student at the Department of Statistics, University of Warwick under Engineering and Physical Sciences Research Council (EPSRC) grant EP/L016710/1, and his work was concluded while being supported by the Max Planck Institute for Mathematics in the Sciences, Leipzig. PJ, MP, and GOR were supported by The Alan Turing Institute under the EPSRC grant EP/N510129/1. GOR was additionally supported under the EPSRC grants EP/K034154/1, EP/K014463/1, and EP/R018561/1. We are grateful to two anonymous referees whose constructive comments have led to substantial improvements over a previous version of this paper.

\section*{Declarations}
\noindent {\bf Competing interests.} The authors declare that they have no conflict of interest.

\noindent {\bf Availability of data and materials.} Code sufficient for reproducing the results in this article are available in the Github repository \url{https://github.com/mmider/blocking}. 


\bibliographystyle{agsm}
\bibliography{paper-ref}

\end{document}